\newtheorem{theorem}{Theorem}[section]
\newtheorem{lemma}[theorem]{Lemma}
\newtheorem{claim}[theorem]{Claim}
\theoremstyle{definition}\newtheorem{example}[theorem]{Example}
\theoremstyle{definition}
\theoremstyle{invariant}
\newcommand{\comment}[1]{}
\newcommand{\QED}{\mbox{}\hfill \rule{3pt}{8pt}\vspace{10pt}\par}
\def\polylog{\operatorname{polylog}}
\newcommand{\ignore}[1]{}
\newcommand{\eat}[1]{}
\newcommand{\diam}{\Lambda}
\newcommand{\graph}{G(\Gamma, \kappa, \Lambda)}
\newcommand{\squishlist}{\begin{itemize}}
\newcommand{\squishend}{\end{itemize}}
\def\PC{\mbox{\sc pc}}
\def\cP{\mathcal{P}}
\def\cH{\mathcal{H}}
\def\cA{\mathcal{A}}
\def\cF{\mathcal{F}}
\def\danupon#1{}
\def\atish#1{}
\def\gopal#1{}
\def\note#1{}
\begin{document}

\title{A tight unconditional lower bound on distributed random walk computation}
\author{
Danupon Nanongkai\thanks{The University of Vienna, Vienna, Austria. \hbox{E-mail}:~\url{danupon@cc.gatech.edu}. Part of this work done while at Georgia Institute of Technology.}
\and
Atish {Das Sarma}\thanks{Google Research, Google Inc., Mountain View, USA. \hbox{E-mail}:~\url{dassarma@google.com}.}
\and
Gopal Pandurangan\thanks{Division of Mathematical
Sciences, Nanyang Technological University, Singapore 637371 and Department of Computer Science, Brown University, Providence, RI 02912, USA.  \hbox{E-mail}:~\url{gopalpandurangan@gmail.com}. Supported in part by NSF grant CCF-1023166 and by a grant from the United States-Israel Binational Science
Foundation (BSF).}
}

\maketitle

\begin{abstract}
We consider the problem of performing a random walk in a distributed network. Given bandwidth constraints, the goal of the problem is to minimize the number of rounds required to obtain a random walk sample. Das Sarma et al. [PODC'10] show that a random walk of length $\ell$ on a network of diameter $D$ can be performed in $\tilde O(\sqrt{\ell D}+D)$ time. A major question left open is whether there exists a faster algorithm, especially whether the multiplication of $\sqrt{\ell}$ and $\sqrt{D}$ is necessary.

In this paper, we show a tight unconditional lower bound on the time complexity of distributed random walk computation. Specifically, we show that for any $n$, $D$, and $D\leq \ell \leq (n/(D^3\log n))^{1/4}$, performing a random walk of length $\Theta(\ell)$ on an $n$-node network of diameter $D$ requires $\Omega(\sqrt{\ell D}+D)$ time. This bound is {\em unconditional}, i.e., it holds for any (possibly randomized) algorithm. To the best of our knowledge, this is the first lower bound that the diameter plays a role of multiplicative factor. Our bound shows that the algorithm of Das Sarma et al. is time optimal.

Our proof technique introduces a new connection between {\em bounded-round} communication complexity and distributed algorithm lower bounds with $D$ as a trade-off parameter, strengthening the previous study by Das Sarma et al. [STOC'11]. In particular, we make use of the bounded-round communication complexity of the pointer chasing problem. Our technique can be of independent interest and may be useful in showing non-trivial lower bounds on the complexity of other fundamental distributed computing problems.
\end{abstract}

\noindent {\bf Keywords:} Random walks, Distributed algorithms, Lower bounds, Communication complexity. \\




\section{Introduction}


The random walk plays a central role in computer science, spanning a wide range of areas in both theory and practice. The focus of this paper is on performing a random walk in distributed networks, in particular, decentralized algorithms for performing a random walk in arbitrary networks.
The random walk is used as an integral subroutine in a wide variety of network applications ranging from token management~\cite{IJ90,BBF04, CTW93},
small-world routing~\cite{K00}, search~\cite{ZS06,AHLP01,C05,GMS05,LCCLS02}, information propagation and gathering~\cite{BAS04,KKD01}, network topology construction~\cite{GMS05,LawS03,LKRG03}, expander testing~\cite{DolevT10}, constructing random spanning
trees~\cite{Broder89, BIZ89, BFG+03},
distributed construction of expander networks \cite{LawS03}, and peer-to-peer membership management~\cite{GKM03,ZSS05}.
For more applications of random walks to distributed networks, see e.g.~\cite{DasSarmaNPT-PODC10}.
%
%
Motivated by the wide applicability of the random walk, \cite{DasSarmaNP09,DasSarmaNPT-PODC10} consider the running time of performing a random walk on the synchronous distributed model. We now explain the model and problems before describing previous work and our results.


\subsection{Distributed Computing Model}
Consider a synchronous  network of processors with unbounded computational power. The network is modeled by an undirected connected $n$-node multi-graph, where nodes model the processors and  edges model the links between the processors. The processors  (henceforth, nodes) communicate  by exchanging messages via the links (henceforth, edges).  The nodes  have limited global knowledge, in particular, each of them has its own local perspective of the network (a.k.a graph), which is confined to its immediate neighborhood.

There are several measures to analyze the performance of algorithms on this model, a fundamental one being the running time, defined as the worst-case number of {\em rounds} of distributed communication. This measure naturally gives rise to a  complexity measure of problems, called the {\em time complexity}. On each round at most $O(\log n)$ bits can be sent through each edge in each direction.
This is a standard model of distributed computation known as the ${\cal CONGEST}$ model~\cite{peleg} and has been attracting a lot of research attention during last two decades (e.g., see \cite{peleg} and the references therein). We note that our result also holds on the ${\cal CONGEST}(B)$ model, where on each round at most $B$ bits can be sent through each edge in each direction (see the remark after Theorem~\ref{thm:rw_lower_bound}). We ignore this parameter to make the proofs and theorem statements simpler.

\subsection{Problems}
The basic problem is {\em computing a random walk where destination outputs source}, defined as follows.
We are given a network $G = (V,E)$ and a source node $s \in V$. The goal is to devise a distributed algorithm such that, in the end, some node $v$ outputs the ID of $s$, where $v$ is a destination node picked according to the probability that it is the destination of a random walk of length $\ell$ starting at $s$. 
We assume the standard random walk where, in each step, an edge is taken from the current node $v$ with probability proportional to $1/d(v)$ where $d(v)$ is the degree of $v$. Our goal is to output a true  random sample from the $\ell$-walk distribution starting from $s$.

For clarity, observe that the following naive algorithm solves the above problem in $O(\ell)$ rounds. The walk of length $\ell$ is performed by sending a token for $\ell$ steps, picking a random neighbor in each step. Then, the destination node $v$ of this walk outputs the ID of $s$.
The main objective of distributed random walk problem is to perform such sampling with significantly less number of rounds, i.e., in time that is sublinear in $\ell$.  On the other hand, we note that it can take too much time (as much as $\Theta(|E|+D)$ time) in the ${\cal CONGEST}$  model to collect all the topological information at the source node (and then computing the walk locally).

The following variations were also previously considered.
\begin{enumerate}
\item \textit{Computing a random walk where source outputs destination}: The problem is almost the same as above except that, in the end, the source has to output the ID of the destination. This version is useful in nodes learning the topology of their surrounding networks and related applications such as a decentralized algorithm for estimating the mixing time \cite{DasSarmaNPT-PODC10}.

\item \textit{Computing a random walk where nodes know their positions}: Instead of outputting the ID of source or destination, we want each node to know its position(s) in the random walk. That is, if $v_1, v_2, ..., v_\ell$ (where $v_1=s$) is the result random walk starting at $s$, we want each node $v_j$ in the walk to know the number $j$ at the end of the process. This version is used to construct a random spanning tree in \cite{DasSarmaNPT-PODC10}.
\end{enumerate}


\subsection{Previous work and our result}

A key purpose of the random walk in many  network applications is to perform  node sampling.  While the sampling requirements in different applications vary, whenever a true sample is required from a random walk of certain steps, typically all applications perform the walk naively --- by simply passing a token from one node to its neighbor: thus to perform a random walk of length $\ell$ takes time linear in $\ell$.
%
Das Sarma et al.~\cite{DasSarmaNP09} showed this is not a time-optimal strategy and the running time can be made sublinear in $\ell$, i.e., performing a random walk of length $\ell$ on an $n$-node network of diameter $D$ can be done in $\tilde O(\ell^{2/3}D^{1/3})$ time where $\tilde O$ hides $\polylog n$. Subsequently, Das Sarma et al.~\cite{DasSarmaNPT-PODC10} improved this bound to $\tilde O(\sqrt{\ell D}+D)$ which holds for all three versions of the problem.

There are two key motivations for obtaining sublinear time bounds. The first is that in many algorithmic applications, walks of length significantly greater than the network diameter are needed.  For example, this is necessary in two applications presented in \cite{DasSarmaNPT-PODC10}, namely distributed computation of a random
spanning tree (RST) and computation of mixing time. More generally, many real-world communication networks  (e.g., ad hoc networks and peer-to-peer networks) have relatively small diameter, and random walks of length at least the diameter are usually performed for many sampling applications, i.e., $\ell >> D$.

The second motivation is understanding the time complexity of distributed random walks. Random walk is essentially a global problem  which requires the algorithm to ``traverse" the entire network. Classical ``global" problems include the minimum spanning tree, shortest path etc. Network diameter is an inherent lower bound for such problems. Problems of this type raise the basic question whether $n$ (or $\ell$ as the case here) time is essential or is the network diameter $D$, the inherent parameter. As pointed out in the seminal work of \cite{peleg-mst}, in the latter case, it would be desirable to design algorithms that have a better complexity for graphs with low diameter. While both upper and lower bounds of time complexity of many ``global'' problems are known (see, e.g., \cite{DasSarmaHKKNPPW10}), the status of the random walk problem is still wide open.


A preliminary attempt to show a random walk lower bound is presented in \cite{DasSarmaNPT-PODC10}. They consider a restricted class of algorithms, where each message sent between nodes must be in the form of an interval of numbers. Moreover, a node is allowed to send a number or an interval containing it only after it receives such number. For this very restricted class of algorithms, a lower bound of $\Omega(\sqrt{\ell}+D)$ is shown~\cite{DasSarmaNPT-PODC10} for the version where every node must know their positions in the end of the computation.
While this lower bound shows a potential limitation of random walk algorithms,
it has many weaknesses. First, it does not employ an information theoretic argument and thus does not hold for all types of algorithm. Instead, it assumes that the algorithms must send messages as intervals and thus holds only for a small class of algorithms, which does not even cover all deterministic algorithms.
Second, the lower bound holds only for the version where nodes know their position(s), thereby leaving lower bounds for the other two random walk versions completely open.
More importantly, there is still a gap of $\sqrt{D}$ between lower and upper bounds, leaving a question whether there is a faster algorithm.


Motivated by these applications, past results, and open problems, we consider the problem of finding lower bounds for random walk computation. In this work, we show an {\em unconditional} lower bound of $\Omega(\sqrt{\ell D}+D)$ for all three versions of the random walk computation problem. This means that the algorithm in \cite{DasSarmaNPT-PODC10} is optimal for all three variations. In particular, we show the following theorem.



\begin{theorem}\label{thm:rw_lower_bound}
For any $n$, $D$ and $\ell$ such that $D\leq \ell\leq (n/(D^3\log n))^{1/4}$, there exists a family of $n$-node networks of diameter $D$ such that performing a random walk (any of the three versions) of length $\Theta(\ell)$ on these networks requires $\Omega(\sqrt{\ell D}+D)$ rounds.
\end{theorem}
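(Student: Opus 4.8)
Since the stated range assumes $D\le\ell$, we have $\sqrt{\ell D}\ge D$, so the additive $D$ term is dominated and it suffices to prove the $\Omega(\sqrt{\ell D})$ bound (at the boundary $\ell=D$ the bound is just the trivial $\Omega(D)$, which holds because the output node must learn information that is $\Theta(D)$ hops away). I would derive the $\Omega(\sqrt{\ell D})$ bound from the \emph{bounded-round} communication complexity of pointer chasing, adapting the simulation framework of Das Sarma et al.~[STOC'11] with the diameter as a trade-off parameter. The ingredient from communication complexity is: for the $k$-round pointer chasing problem $\mathrm{PC}_k$ on domain $[m]$ (Alice holds the pointer functions at odd layers, Bob at even layers, and the players must report the node reached after $k$ jumps), any protocol using fewer than $k$ rounds requires $\tilde\Omega(m/k^2)$ bits, even for randomized bounded-error protocols (Nisan--Wigderson; Ponzio--Radhakrishnan--Venkatesh); moreover solving $\Gamma$ independent copies requires $\tilde\Omega(\Gamma m/k^2)$ bits by a direct-sum argument for bounded-round protocols.

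Next I would construct the graph family $G(\Gamma,\kappa,\Lambda)$. Set $\Lambda=\Theta(D)$, $\kappa=k=\Theta(\ell/D)$, and let $\Gamma$ and $m$ be a cut-width/parallelism parameter and the pointer domain size, to be fixed later. The graph consists of $k$ ``layers'' wired according to an embedded (possibly $\Gamma$-fold) $\mathrm{PC}_k$ instance, with consecutive layers joined by \emph{biased} corridors of length $\Lambda$ --- the bias ensures that a random walk drifts across a corridor in $\Theta(\Lambda)$ steps rather than $\Theta(\Lambda^2)$, so that the total walk length stays $\Theta(k\Lambda)=\Theta(\ell)$ while the diameter stays $\Theta(D)$. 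The wiring is arranged so that a random walk of length $\Theta(\ell)$ from the designated source follows, with high probability, the pointer-chasing chain $v_0\mapsto v_1\mapsto\cdots\mapsto v_k$, with all genuine branching of the walk made irrelevant to the answer; hence its destination, and therefore the value output by the relevant node in any of the three problem versions, determines the $\mathrm{PC}_k$ answer. A linear layout with odd layers assigned to ``Alice'' and even layers to ``Bob'' keeps the Alice--Bob cut narrow relative to $\Lambda$.

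The core of the argument is then a round-sensitive simulation lemma: any (randomized) $T$-round distributed random-walk algorithm on $G(\Gamma,\kappa,\Lambda)$ yields a communication protocol for the embedded $\mathrm{PC}_k$ instance that uses $O(T/\Lambda)=O(T/D)$ rounds and $O(T\cdot\Gamma\cdot\log n)$ bits in total --- the factor $1/\Lambda$ comes precisely from the fact that one ``round of influence'' between Alice's and Bob's inputs must cross a length-$\Lambda$ corridor. Now suppose, for contradiction, that $T< c\sqrt{\ell D}$ for a small constant $c$. Since $D\le\ell$ we get $T<\ell$, hence $O(T/D)<k$, so the induced protocol is a $(<k)$-round protocol for (the $\Gamma$-fold) $\mathrm{PC}_k$ and must use $\tilde\Omega(\Gamma m/k^2)$ bits; but it uses only $O(T\Gamma\log n)=O(\sqrt{\ell D}\,\Gamma\log n)$ bits. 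Choosing $m$ just large enough that $m/k^2$ exceeds $\tilde O(\sqrt{\ell D}\log n)$, and then choosing $\Gamma$ as large as the vertex budget allows, one verifies that the resulting $n$-vertex, diameter-$\Theta(D)$ family exists exactly when $n$ is at least (a constant times) $\ell^4 D^3\log n$, i.e.\ exactly when $\ell\le (n/(D^3\log n))^{1/4}$ --- giving the contradiction and the theorem (the $\mathcal{CONGEST}(B)$ version follows by tracking $B$ in place of $\log n$).

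The step I expect to be the main obstacle is the gadget design together with the round-sensitive simulation lemma: one must simultaneously (i) force an intrinsically random process to ``compute'' a deterministic pointer-chasing instance (via strongly biased corridors so the walk's trajectory concentrates on the intended path and all real branching is a ``don't care''), (ii) keep the diameter at $\Theta(D)$ while the walk length is $\Theta(\ell)\gg D$, (iii) keep the Alice--Bob cut narrow enough, relative to both $\Lambda=\Theta(D)$ and $m/k^2$, that the bounded-round pointer-chasing lower bound is not swamped by the available bandwidth, and (iv) make the time-to-rounds conversion in the simulation rigorous for arbitrary adaptive randomized algorithms and for all three output conventions at once. The delicate accounting is making (iii) and the vertex count (ii) compatible with the tight range $\ell\le (n/(D^3\log n))^{1/4}$.
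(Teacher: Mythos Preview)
Your high-level strategy is correct and matches the paper's: reduce from bounded-round pointer chasing via a round-sensitive simulation lemma, and use edge-multiplicity gadgets so that a random walk of length $\Theta(\ell)$ traces the pointer-chasing path with high probability. However, your proposed graph structure and parameterization differ from the paper's in a way that creates a genuine gap.

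The paper does \emph{not} lay out $k=\ell/D$ layers linearly with Alice's and Bob's inputs interleaved. Instead, it places all of Alice's input (a single function $f_A$) at one end $s$ and all of Bob's (a single $f_B$) at the other end $t$, and makes the walk traverse back and forth between $s$ and $t$ a total of $2r$ times, where $r=\Theta(\Lambda^{\kappa-1})$ is the pointer-chasing depth. Each traversal has length $L=\Theta(\kappa\Lambda^\kappa)$ along one of $\Gamma$ parallel paths, so $\ell=2rL$. The diameter is kept at $D=\Theta(\kappa\Lambda)$ by a hierarchical highway structure with $\kappa$ levels; the only small-capacity edges crossing any $s$-side/$t$-side cut are the $\kappa$ highway edges, so the simulation uses $O(\kappa\log n)$ bits per simulated step and $O(T/(\kappa\Lambda))=O(T/D)$ communication rounds. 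The back-and-forth structure is precisely what lets Alice and Bob each hold a single localized input and run the standard ``Alice simulates the left region, Bob the right region, both regions shrink over time'' argument.

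Your linear layered layout with alternating Alice/Bob inputs does not support this simulation: if odd-layer and even-layer pointer edges are interleaved along the walk, neither party can simulate any contiguous region past time $\Lambda$ without knowing the other's input, so the claimed $O(T/D)$-round conversion breaks. The layout also has natural diameter $\Theta(k\Lambda)=\Theta(\ell)$, not $\Theta(D)$; shortcuts added to fix this would both perturb the walk and widen the cut. Finally, the paper uses no direct-sum over $\Gamma$ copies of pointer chasing --- $\Gamma$ is merely the number of parallel paths needed to host the $2rm$ distinct traversals of a \emph{single} instance, and it is $\kappa$ (the highway count), not $\Gamma$, that governs the bit rate in the simulation. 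Matching the pointer-chasing depth $r$ to the communication-round budget $\Theta(T/D)$ (rather than taking depth $\ell/D$, which is far larger) is what makes the Nisan--Wigderson bound bite at the right threshold and, after accounting for $\Gamma\ge 2rm$ and $n=\Theta(\Gamma\kappa\Lambda^\kappa)$, produces the stated range $\ell\le(n/(D^3\log n))^{1/4}$.
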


%

%
%
We note that our lower bound of $\Omega(\sqrt{\ell D}+D)$ also holds for the general ${\cal CONGEST}(B)$ model, where each edge has bandwidth $B$ instead of $O(\log n)$, as long as $\ell\leq (n/(D^3 B))^{1/4}$.
Moreover, one can also show a lower bound on simple graphs by subdividing edges in the network used in the proof and double the value of $\ell$.

\subsection{Techniques and proof overview}

Our main approach relies on enhancing the connection between communication complexity and distributed algorithm lower bounds first studied in \cite{DasSarmaHKKNPPW10}. It has been shown in \cite{DasSarmaHKKNPPW10} that a fast distributed algorithm for computing a function can be converted into a two-party communication protocol with small message complexity to compute the same function. In other words, the communication complexity lower bounds implies the time complexity lower bounds of distributed algorithms. This result is then used to prove lower bounds on many {\em verification problems}. (In the verification problems, we are given $H$, a subgraph of the network  $G$, where each vertex of $G$ knows which edges incident on it are in $H$. We would like to verify whether $H$ has some properties, e.g., if it is a tree or if it is connected.) The lower bounds of verification problems are then used to prove lower bounds of approximation algorithms for many graph problems. Their work, however, does not make progress on achieving {\em any} unconditional lower bound on the random walk problem.

Further, while this approach has been successfully used to show lower bounds for several problems in terms of network size (i.e., $n$), it is not clear how to apply them to random walk computation. All the lower bounds previously shown are for {\em optimization} problems for well-defined metrics - for e.g. computing a minimum spanning tree. Random walk computation, on the other hand, is not deterministic; the input requires parameters such as the length of the walk $\ell$ and even if the source node and $\ell$ are fixed, the {\em solution} (i.e. the walk) is not uniquely determined. While even other problems, such as MST, can have multiple solutions, for optimization problems, {\em verification} is well-defined. It is not clear what it even means to verify whether a random walk is {\em correct}. For this reason, proving a lower bound of random walk computation through verification problems seems impossible.

Additionally, in terms of the theoretical bound we obtain, a key difficulty in our result is to introduce the graph parameter diameter (i.e., $D$) into the lower bound multiplied by $\ell$. A crucial shortcoming in extending previous work in this regard is that the relationship between communication complexity and distributed computing shown in \cite{DasSarmaHKKNPPW10} does not depend on the network diameter $D$ at all. In fact, such a relationship might not exist since the result in \cite{DasSarmaHKKNPPW10} is tight for some functions.

To overcome these obstacles, we consider a variation of communication complexity called {\em  $r$-round two-party communication complexity}, which has been successfully used in, e.g., circuit complexity and data stream computation (see, e.g., \cite{FeigenbaumKMSZ08,NisanW93}). We obtain a new connection showing that a fast distributed algorithm for computing a function can be converted to a two-party communication protocol with a small message complexity {\em and number of rounds} to compute the same function. Moreover, the larger the network diameter is, the smaller the number of rounds will be.
To obtain this result one need to deal with a more involved proof; for example, the new proof does not seem to work for the networks previously considered \cite{DasSarmaHKKNPPW10,Elkin06,KorKP11,LotkerPP06,PelegR00} and thus we need to introduce a new network called $\graph$ (which is essentially an extension of the network $F^2_m$ in \cite{PelegR00}). This result and related definitions are stated and proved in Section~\ref{sec:communication_complexity}.

A particular communication complexity result that we will use is that of Nisan and Wigderson~\cite{NisanW93} for the {\em $r$-round pointer chasing problem}. Using the connection established in Section~\ref{sec:communication_complexity}, we derive a lower bound of any distributed algorithms for solving the pointer chasing problem on a distributed network. This result is in Section~\ref{sec:pointer_chasing}.

Finally, we prove Theorem~\ref{thm:rw_lower_bound} from the lower bound result in Section~\ref{sec:pointer_chasing}. The main idea, which was also used in \cite{DasSarmaNPT-PODC10}, is to construct a network that has the same structure as $\graph$ (thus has the same diameter and number of nodes) but different edge capacities (depending on the input) so that a random walk follows a desired path (which is unknown) with high probability.
This proof is in Section~\ref{sec:main_theorem}.

\section{From bounded-round communication complexity to dis\-tri\-bu\-ted algorithm lower bounds}\label{sec:communication_complexity}


Consider the following problem. There are two parties that have unbounded computational power. Each party receives a $b$-bit string, for some integer $b\geq 1$, denoted by $\bar{x}$ and $\bar{y}$ in $\{0, 1\}^b$. They both want to together compute $f(\bar{x}, \bar{y})$ for some function $f:\{0, 1\}^b\times \{0, 1\}^b\rightarrow \mathbb{R}$. At the end of the computation, the party receiving $\bar{y}$ has to output the value of $f(\bar{x}, \bar{y})$. We consider two models of communication.

\squishlist
\item {\em $r$-round direct communication:} This is a variant of the standard model in communication complexity (see \cite{NisanW93} and references therein). Two parties can communicate via a bidirectional edge of unlimited bandwidth. We call the party receiving $\bar{x}$ {\em Alice}, and the other party {\em Bob}. Two parties communicate in {\em rounds} where each round Alice sends a message (of any size) to Bob followed by Bob sending a message to Alice.

\item {\em Communication through network $\graph$:} Two parties are distinct nodes in a distributed network $\graph$, for some integers $\Gamma$ and $\Lambda$ and real $\kappa$; all networks in $\graph$ have $\Theta(\kappa\Gamma\Lambda^\kappa)$ nodes and a diameter of $\Theta(\kappa\Lambda)$.
    (This network is described below.) We denote the nodes receiving $\bar{x}$ and $\bar{y}$ by $s$ and $t$, respectively. 
\squishend

We consider the {\em public coin randomized algorithms} under both models. In particular, we assume that all parties (Alice and Bob in the first model and all nodes in $\graph$ in the second model) share a random bit string of infinite length.
For any $\epsilon\geq 0$, we say that a randomized algorithm $\mathcal{A}$ is {\em $\epsilon$-error} if for any input, it outputs the correct answer with probability at least $1-\epsilon$, where the probability is over all possible random bit strings.
In the first model, we focus on the message complexity, i.e., the total number of bits exchanged between Alice and Bob, denoted by  $R_{\epsilon}^{r-cc-pub}(f)$. In the second model, we focus on the running time, denoted by $R_\epsilon^{\graph, s, t}(f)$.

Before we describe $\graph$ in detail, we note the following characteristics which will be used in later sections. An essential part of $\graph$ consists of $\Gamma$ {\em paths}, denoted by $\cP^1, \ldots, \cP^\Gamma$ and nodes $s$ and $t$ (see Fig.~\ref{fig:graph}). Every edge induced by this subgraph has infinitely many copies (in other words, infinite capacity). (We let some edges to have infinitely many copies so that we will have a freedom to specify the number of copies later on when we prove Theorem~\ref{thm:rw_lower_bound} in Section~\ref{sec:main_theorem}.
%
The leftmost and rightmost nodes of each path are adjacent to $s$ and $t$ respectively. Ending nodes on the same side of the path (i.e., leftmost or rightmost nodes) are adjacent to each other.
%
%
The following properties of $\graph$ follow from the construction of $\graph$ described in Section~\ref{subsec:graph_description}. 
\begin{lemma}\label{lem:graphsize} For any $\Gamma\geq 1$, $\kappa\geq 1$ and $\Lambda\geq 2$, network $\graph$ has $\Theta(\Gamma\kappa\Lambda^\kappa)$ nodes. Each of its path $\cP^i$ has $\Theta(\kappa\Lambda^{\kappa})$ nodes. Its diameter is $\Theta(\kappa\diam)$.
\end{lemma}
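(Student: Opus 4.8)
My plan is to get the vertex count and the path length by reading them off the explicit construction of $\graph$ promised in Section~\ref{subsec:graph_description}, and to devote the real work to the diameter estimate. I expect that construction to realize $\graph$ as the union of the $\Gamma$ path gadgets $\cP^1,\dots,\cP^\Gamma$, a ``shortcut hierarchy'' of $\kappa$ levels hung on each gadget, and the two distinguished vertices $s,t$ together with the edges joining gadget endpoints to $s$, to $t$, and to each other. Granting this, each $\cP^i$ is by design a path whose vertex set is a concatenation of $\Lambda^{\kappa}$ blocks each carrying $\Theta(\kappa)$ vertices, so $|\cP^i| = \Theta(\kappa\Lambda^{\kappa})$; and the hierarchy over one gadget adds only $\sum_{j=1}^{\kappa}\Theta(\Lambda^{\kappa-j}) = O(\Lambda^{\kappa-1})$ vertices, a lower-order term compared with $\kappa\Lambda^{\kappa}$. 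Summing over the $\Gamma$ gadgets and adding $s,t$ gives $\Theta(\Gamma\kappa\Lambda^{\kappa})$ vertices in all, which, together with the per-gadget count, settles the first two assertions.

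For the diameter upper bound I would show that every vertex lies within $O(\kappa\diam)$ of $s$, so that the triangle inequality bounds the diameter by $O(\kappa\diam)$. Given a vertex $v$ in a gadget $\cP^i$, first walk to the level-$1$ shortcut vertex of $v$'s block ($O(\kappa)$ hops inside the block), then climb the hierarchy one level at a time: passing from a level-$j$ shortcut vertex to the incident level-$(j+1)$ one costs $O(\Lambda)$ hops, because the $\Lambda$ level-$j$ children of a level-$(j+1)$ block are joined only along a chain of length $\Lambda$. After $\kappa$ climbs one reaches the top of gadget $i$'s hierarchy, from which $s$ is $O(\Lambda)$ away; the total is $O(\kappa)+\kappa\cdot O(\Lambda)+O(\Lambda)=O(\kappa\diam)$. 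Hierarchy vertices and $t$ are handled identically.

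For the lower bound it is enough to exhibit one pair of vertices at distance $\Omega(\kappa\diam)$, and I would take $s$ and $t$ (or, say, the midpoints of two distinct gadgets). The content is that the shortcut edges cannot cut such a route below $\Omega(\kappa\diam)$. I would formalize this by a potential argument: assign each vertex $v$ a height $h(v)\in\{0,\dots,\kappa\}$ ($0$ on the gadget paths, $j$ on a level-$j$ shortcut vertex) and, at each height, a position along the relevant chain, and then check edge by edge that one edge changes the height by at most $1$ and, at a fixed height, changes the chain position by $O(1)$. Since any walk from $s$ to $t$ must raise the height from $0$ to $\kappa$ and back, and since crossing from a sub-block to a sibling sub-block forces one to traverse the full length-$\Lambda$ chain at the appropriate level, every such walk has length $\Omega(\kappa\diam)$; combined with the upper bound this yields diameter $\Theta(\kappa\diam)$.

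The step I expect to be the real obstacle is this lower bound: one must rule out that some unforeseen combination of shortcut and chain edges beats the route used in the upper-bound argument. The height/position bookkeeping above is the clean way to do this, but its inequalities and constants have to be matched to the exact wiring fixed in Section~\ref{subsec:graph_description}; if that wiring is designed with the diameter in mind, the lower bound may in fact reduce to a direct inspection of the shortest path between $s$ and $t$. The counting arguments, by contrast, should be routine once the construction is on the table.
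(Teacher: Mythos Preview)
Your overall strategy matches the paper's: read off the vertex counts from the construction, bound the diameter from above by routing through the hierarchy, and bound it from below via the $s$--$t$ distance. The paper's proof is in fact shorter than what you sketch, because the lower bound really does reduce to the ``direct inspection'' you anticipate in your final paragraph: the top highway $\cH^1$ is a path of $2\lceil\kappa\rceil\Lambda+1$ nodes and every $s$--$t$ route must cross it, so $\Omega(\kappa\Lambda)$ is immediate and no potential argument is needed.

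Where your proposal drifts is in the guessed anatomy of $\graph$, and those guesses would make your constants wrong if taken literally. Three points: (i) the shortcut hierarchy is \emph{global}, not per-gadget---there are $\lfloor\kappa\rfloor$ highways $\cH^1,\dots,\cH^{\lfloor\kappa\rfloor}$ shared by all $\Gamma$ paths, with $|\cH^i|=2\lceil\kappa\rceil\Lambda^i+1$, summing to $\Theta(\kappa\Lambda^{\lfloor\kappa\rfloor})$; (ii) each path $\cP^i$ is not ``$\Lambda^\kappa$ blocks of size $\Theta(\kappa)$'' but rather $\Theta(\kappa\Lambda^{\lfloor\kappa\rfloor})$ subpaths $\cP^i_j$ of \emph{variable} length $\phi'_j$, where $\phi'_j$ grows with $|j|$ up to $\Theta(\kappa\Lambda)$, and the explicit sum $\sum_j \phi'_j=\Theta(\kappa\Lambda^\kappa)$ is what gives the path length; (iii) consequently, the first hop of your upper-bound route---from a path vertex to the nearest $\cH^{\lfloor\kappa\rfloor}$ node---costs $O(\kappa\Lambda)$, not $O(\kappa)$, since the subpath you sit in may have length $\Theta(\kappa\Lambda)$. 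None of this changes the final $\Theta(\kappa\Lambda)$ diameter or the $\Theta(\Gamma\kappa\Lambda^\kappa)$ vertex count, but you should align your bookkeeping with the actual $\phi'_j$ and the shared-highway layout in Section~\ref{subsec:graph_description} before writing the proof out.
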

\begin{proof} 
It follows from the construction of $\graph$ in Section~\ref{subsec:graph_description} that the number of nodes in each path $\cP^i$ is
$\sum_{j=-\lceil\kappa\rceil\Lambda^{\lfloor\kappa\rfloor}}^{\lceil\kappa\rceil\Lambda^{\lfloor\kappa\rfloor}} \phi'_j = \Theta(\kappa\Lambda^\kappa)$ (cf. Eq.~\eqref{eq:sum_phi'}).
\note{The real value is between $2\Lambda^\kappa$ and $2(\Lambda^\kappa+\Lambda^{\lfloor\kappa\rfloor})$}
Since there are $\Gamma$ paths, the number of nodes in all paths is $\Theta(\Gamma\kappa\Lambda^\kappa)$.
Each highway $\cH^i$ has $2\lceil\kappa\rceil\Lambda^i+1$ nodes. Therefore, there are $\sum_{i=1}^{\lfloor\kappa\rfloor} (2\lceil\kappa\rceil\Lambda^i+1)$ nodes in the highways. For $\Lambda\geq 2$, the last quantity is $\Theta(\lceil\kappa\rceil\Lambda^{\lfloor\kappa\rfloor})$.
Hence, the total number of nodes is $\Theta(\Gamma\kappa\Lambda^\kappa)$.

To analyze the diameter of $\graph$, observe that each node on any path $\cP^i$ can reach a node in highway $\cH^{\lfloor\kappa\rfloor}$ by traveling through $O(\kappa\Lambda)$ nodes in $\cP^i$. Moreover, any node in highway $\cH^i$ can reach a node in highway $\cH^{i-1}$ by traveling trough $O(\Lambda)$ nodes in $\cH^i$. Finally, there are $O(\kappa\Lambda)$ nodes in $\cH^1$. Therefore, every node can reach any other node in $O(\kappa\Lambda)$ steps by traveling through $\cH^1$. Note that this upper bound is tight since the distance between $s$ and $t$ is $\Omega(\kappa\Lambda)$.
\end{proof}

The rest of this section is devoted to prove Theorem~\ref{thm:cc_to_distributed} which strengthens Theorem~3.1 in \cite{DasSarmaHKKNPPW10}. Recall that Theorem~3.1 in \cite{DasSarmaHKKNPPW10} states that if there is a fast $\epsilon$-error algorithm for computing function $f$ on any network $\graph$, then there is a fast $\epsilon$-error algorithm for Alice and Bob to compute $f$, as follows\footnote{Note that Theorem~3.1 in \cite{DasSarmaHKKNPPW10} is in fact stated on a graph different from $\graph$ but its proof can be easily adapted to prove Theorem~\ref{thm:prev_cc_to_distributed}.}.

\begin{theorem}[Theorem~3.1 in \cite{DasSarmaHKKNPPW10}]\label{thm:prev_cc_to_distributed}
Consider any integers $\Gamma\geq 1$, $\Lambda\geq 2$, real $\kappa\geq 1$ and function $f:\{0, 1\}^{b}\times \{0, 1\}^{b} \rightarrow \mathbb{R}$. Let $r=R_\epsilon^{\graph, s, t}(f)$. For any $b$, if $r \le \kappa\Lambda^\kappa$ then $f$ can be computed by a direct communication protocol using at most $(2\kappa\log{n}) r$ communication bits in total. In other words,
$$R_{\epsilon}^{\infty-cc-pub}(f)\leq (2\kappa\log{n})R_\epsilon^{\graph, s, t}(f)\,.$$
\end{theorem}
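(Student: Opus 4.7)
The plan is to adapt the simulation-based proof of Theorem~3.1 in \cite{DasSarmaHKKNPPW10} to the network $\graph$. Fix an $\epsilon$-error distributed algorithm $\cA$ for $f$ on $\graph$ running in $r \le \kappa\Lambda^\kappa$ rounds; I will describe an $\epsilon$-error two-party protocol that computes $f$ using $(2\kappa \log n) r$ bits. Alice is made responsible for the nodes in a set $V_A \ni s$ and Bob for $V_B \ni t$. Both parties draw all private random bits used by $\cA$ from their shared public random string, so each knows exactly which coin flips every node would perform. Alice initializes the state of $s$ with her input $\bar{x}$ and Bob initializes the state of $t$ with $\bar{y}$; the initial states of the remaining nodes depend only on the (fixed) graph and the shared randomness and are therefore known to both. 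Round by round, each party computes the messages that every node it simulates would send under $\cA$; messages whose sender and receiver are on the same side are applied locally, while messages crossing from $V_A$ to $V_B$ (or vice versa) are transmitted to the other party in the direct-communication protocol. After $r$ simulated rounds Bob outputs $f(\bar{x},\bar{y})$ from the state of $t$, with the same error probability as $\cA$.

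The total communication is $(\text{cut size})\cdot O(\log n) \cdot r$, so the proof reduces to exhibiting a partition $(V_A,V_B)$ for which at most $2\kappa$ finite-bandwidth edges cross the cut. The partition I plan to use follows the hierarchical structure of $\graph$: cut each of the $\lfloor\kappa\rfloor$ highways $\cH^1,\dots,\cH^{\lfloor\kappa\rfloor}$ at an appropriate interior vertex, contributing $O(1)$ edges apiece, and place the entire long-path subgraph $\{s,t,\cP^1,\dots,\cP^\Gamma\}$ (whose edges have infinite capacity and therefore cannot be cut) on a single side, with both Alice and Bob maintaining redundant copies of the path interiors. The hypothesis $r\le \kappa\Lambda^\kappa$ is exactly what makes this work: by Lemma~\ref{lem:graphsize} each $\cP^i$ has $\Theta(\kappa\Lambda^\kappa)$ vertices, so no input-dependent signal originating at $s$ or $t$ can traverse a whole $\cP^i$ within $r$ rounds. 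Thus every path node is, in every one of the $r$ simulated rounds, either reachable only through the $O(\kappa)$ highway cut or else has a state that depends only on shared randomness. Only the highway cut edges carry input-dependent traffic, and since each carries $O(\log n)$ bits per round the total communication is at most $2\kappa\log n \cdot r$.

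The main obstacle, and the step I expect to require the most care, is a consistency argument showing that the redundant simulations maintained by Alice and Bob on the ``neutral'' portion of the network (the path interiors) always agree. The natural way to argue this is an induction on the round number: one shows that after round $i$, every node whose state has been affected by $\bar{x}$ either lies in $V_A$ or has received, via some earlier cross-cut transmission, a message from Alice that the simulations on both sides have already incorporated identically; symmetrically for $\bar{y}$ and $V_B$. The inductive step rests on the fact that the only $s$--$t$ routes in $\graph$ of length at most $r$ must traverse at least one highway cut edge, which is precisely the graph-theoretic content needed. Once this consistency is established, both Alice's and Bob's simulations agree on the state of every neutral node in every round, Bob's simulated $t$ matches $\cA$'s execution exactly, and the bound $R_{\epsilon}^{\infty\text{-}cc\text{-}pub}(f)\le (2\kappa\log n)\, R_\epsilon^{\graph,s,t}(f)$ follows.
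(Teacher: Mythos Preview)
Your plan has a genuine gap: a \emph{static} cut cannot work on $\graph$, and the ``redundant simulation'' patch you propose does not rescue it.

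Concretely, every $s$--$t$ separating edge set in $\graph$ contains infinite-capacity path edges, so you cannot put $s$ and $t$ on opposite sides of a finite-bandwidth cut. You try to sidestep this by cutting only the $\lfloor\kappa\rfloor$ highways at fixed interior vertices and letting both parties simulate all path nodes. But then consider any fixed highway cut vertex, say $h^i_0$. Its $G$-distance to $t$ (and to $s$) is only $\Theta(\kappa\Lambda)$, the diameter: from $t$ one reaches $v^x_\infty$, walks a short subpath to $v^x_{j,1}$, jumps to $h^{\lfloor\kappa\rfloor}_j$, and traverses the highways. Hence after $\Theta(\kappa\Lambda)$ simulated rounds the true state of $h^i_0$ depends on \emph{both} $\bar{x}$ and $\bar{y}$, so neither Alice's nor Bob's local simulation of it is correct, and the cut messages you exchange become wrong. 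From that point the error propagates back to $t$ in another $\Theta(\kappa\Lambda)$ rounds. Since you need to simulate up to $r\le\kappa\Lambda^\kappa$ rounds, the protocol fails long before Bob can read off $t$'s final state. Your ``consistency'' induction cannot close this hole: the fact that every short $s$--$t$ route crosses the cut is true but irrelevant---what matters is whether one party alone can compute the state at the cut boundary, and after $\Theta(D)$ rounds neither can.

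The paper's argument (carried out in full for the stronger Theorem~\ref{thm:cc_to_distributed}) avoids this by using \emph{dynamically shrinking} sets $S_{i,j}$: at time $\tau$ Alice maintains the configuration of a set that has retreated $\tau$ path-steps from $t$'s side, and symmetrically for Bob. The crucial structural fact (Lemma~\ref{lem:after_iteration}) is that when the set shrinks by one path node, all neighbors of the new set lie in the old set \emph{except} for at most $\lfloor\kappa\rfloor$ highway nodes---one per $\cH^i$---whose states the other party can supply. The moving boundary is what keeps the needed external information down to $O(\kappa\log n)$ bits per simulated round; a fixed boundary cannot achieve this on $\graph$. To repair your argument you would essentially have to rediscover these shrinking sets.
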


The theorem above does not try to optimize number of rounds used by direct communication protocols. In fact, a closer look into the proof of Theorem~3.1 in \cite{DasSarmaHKKNPPW10} reveals that $\tilde \Theta((2\kappa\log{n})R_\epsilon^{\graph, s, t}(f))$ rounds of communication are used.

Theorem~\ref{thm:cc_to_distributed} stated below strengthens the above theorem by making sure that the number of rounds needed in the direct communication is small. In particular, it says that if there is a fast $\epsilon$-error algorithm for computing function $f$ on any network $\graph$, then there is a fast {\em bounded-round} $\epsilon$-error algorithm for Alice and Bob to compute $f$. More importantly, the number of rounds depends on the diameter of $\graph$ (which is $\Theta(\kappa\diam)$), i.e., the larger the network diameter, the smaller the number of rounds.


\begin{theorem}\label{thm:cc_to_distributed}
Consider any integers $\Gamma\geq 1$, $\Lambda\geq 2$, real $\kappa\geq 1$ and function $f:\{0, 1\}^{b}\times \{0, 1\}^{b} \rightarrow \mathbb{R}$. Let $r=R_\epsilon^{\graph, s, t}(f)$. For any $b$, if $r \le \kappa\Lambda^\kappa$ then $f$ can be computed by a $\frac{8r}{\kappa\Lambda}$-round direct communication protocol using at most $(2\kappa\log{n}) r$ communication bits in total. In other words,
$$R_{\epsilon}^{\frac{8R_\epsilon^{\graph, s, t}(f)}{\kappa\Lambda}-cc-pub}(f)\leq (2\kappa\log{n})R_\epsilon^{\graph, s, t}(f)\,.$$
%
\end{theorem}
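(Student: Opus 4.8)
The plan is to re-run the simulation behind Theorem~\ref{thm:prev_cc_to_distributed} ``in blocks'', so that Alice and Bob exchange messages only once per $\Theta(\kappa\Lambda)$ simulated rounds rather than every round. Fix an $\epsilon$-error distributed algorithm $\cA$ for $f$ on $\graph$ running in $r=R_\epsilon^{\graph, s, t}(f)$ rounds. Since all nodes share an infinite public random string, condition on it, so that $\cA$ and the whole reduction become deterministic; the $\epsilon$-error is then recovered at the very end by feeding the same string to the public coins of the protocol. We may assume $r=\Omega(\kappa\Lambda)$, as otherwise $f$ cannot depend on both inputs (the distance between $s$ and $t$ in $\graph$ is $\Theta(\kappa\Lambda)$) and the claim is trivial. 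Following the layered structure of $\graph$ (Section~\ref{subsec:graph_description}), I would fix a vertex partition $V(\graph)=V_A\sqcup V_B$ with $s\in V_A$ and $t\in V_B$ in which the cut $E(V_A,V_B)$ splits into two kinds of edges: (i) at most $2\kappa$ edges of bandwidth $O(\log n)$, coming from the highways $\cH^1,\dots,\cH^{\lfloor\kappa\rfloor}$ (a bounded number per highway, cf.\ Lemma~\ref{lem:graphsize}); and (ii) edges lying deep in the interiors of the length-$\Theta(\kappa\Lambda^\kappa)$ paths $\cP^i$. The point of using the new network $\graph$ is that, because $r\le\kappa\Lambda^\kappa$, the construction can be arranged so that whatever $\cA$ sends along the type-(ii) edges never affects the value eventually computed at $t$, so the simulation may simply discard it; only the type-(i) traffic needs to be communicated between Alice and Bob.

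Given such a partition, the simulation runs as follows. Split the $r$ rounds of $\cA$ into $\frac{8r}{\kappa\Lambda}$ consecutive blocks of $\Theta(\kappa\Lambda)$ rounds, and let Alice (who holds $\bar x$ at $s$ and simulates $V_A$) and Bob (who holds $\bar y$ at $t$ and simulates $V_B$) alternate ownership of blocks. The owner of a block advances its side of the simulation by a full block, using the type-(i) cross-cut traffic received at the previous block boundary — a standard bookkeeping argument (trailing the simulated frontier by one block) absorbs the one-block propagation lag — and nothing for the type-(ii) edges, as justified above; at the end of its block it sends the other party the $O(\kappa^2\Lambda\log n)$ bits that crossed the type-(i) edges during the block. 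After all $\frac{8r}{\kappa\Lambda}$ blocks, Bob has faithfully reproduced the view of $t$ in $\cA$'s execution under the shared string, and outputs $\cA$'s answer, which is correct with probability at least $1-\epsilon$. The protocol uses $\frac{8r}{\kappa\Lambda}$ rounds; and since the $\le 2\kappa$ type-(i) edges carry at most $2\kappa\log n$ bits per round, at most $2\kappa r\log n$ bits cross them over the $r$ rounds of $\cA$, so the blocked protocol exchanges at most $(2\kappa\log n)r$ bits in total, as claimed.

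The main obstacle — and the genuinely new ingredient compared with \cite{DasSarmaHKKNPPW10} — is item (ii): designing $\graph$ so that the only cross-cut edges carrying information that can reach $t$ within $r$ rounds are the $O(\kappa)$ bounded-bandwidth highway edges, while simultaneously keeping the diameter at $\Theta(\kappa\Lambda)$ and the high-bandwidth paths at length $\Theta(\kappa\Lambda^\kappa)$. Reconciling ``diameter $\Theta(\kappa\Lambda)$'' with ``a cross-cut through the paths that is irrelevant over any $r\le\kappa\Lambda^\kappa$ rounds'' is exactly why the hierarchy $\cP^i\to\cH^{\lfloor\kappa\rfloor}\to\cdots\to\cH^1$ is built the way it is, and is the heart of Section~\ref{subsec:graph_description}; I expect this, together with the propagation-lag bookkeeping in the block simulation, to be where all the real work lies, with the round- and bit-counting being routine.
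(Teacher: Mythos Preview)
Your high-level intuition --- that only the $O(\kappa)$ bounded-bandwidth highway edges need to be communicated --- is correct, but the mechanism you propose cannot work as stated. A \emph{fixed} bipartition $V_A\sqcup V_B$ is incompatible with a block simulation of block length $\Theta(\kappa\Lambda)$: the diameter of $\graph$ is itself $\Theta(\kappa\Lambda)$, so after a single block the light cone from the cut already reaches both $s$ and $t$, and neither party can advance even one block without the other party's \emph{concurrent} messages on the cut. Your ``trailing the simulated frontier by one block'' does not rescue this, because there is no buffer of distance $\Theta(\kappa\Lambda)$ between $s$ (or $t$) and the cut --- the highways make every node $\Theta(\kappa\Lambda)$-close to every cut. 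For the same reason the claim that ``type-(ii) path-edge traffic never affects $t$ within $r$ rounds'' is false: a message crossing a path edge can hop onto $\cH^{\lfloor\kappa\rfloor}$ within $O(\Lambda)$ steps and reach $t$ in $\Theta(\kappa\Lambda)$ further rounds.

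The paper's argument is not a fixed-cut simulation but a \emph{moving, heavily overlapping} one. Alice and Bob each maintain the full configuration on a large set $S_{i,j}$; Alice's set always contains $s$, Bob's always contains $t$, and the two sets overlap on most of the graph. In each of Alice's iterations her set shrinks on the $t$-side by one $\cH^1$-step (distance $\Lambda^{\lfloor\kappa\rfloor-1}$), so every neighbor of her new set lies in her old set and she needs nothing from Bob; simultaneously Bob's set shrinks on the $s$-side by a single \emph{path} node, and the only neighbors of his new set not already in his old set are at most $\kappa$ highway vertices, whose outgoing messages Alice can compute and send ($\kappa\log n$ bits). The path-cut edges you call ``type-(ii)'' are never discarded --- both their endpoints lie inside Bob's old set, so he computes that traffic himself. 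The nonuniform subpath lengths $\phi'_j$ (growing linearly in $|j|$) are then the device that turns this into a round bound: protocol round $r$ packs $\phi'_r$ iterations, and the Claim after Lemma~\ref{lem:after_iteration} shows $\sum_{r\ge r^*}\phi'_r\ge T_\cA$ already for $r^*\ge \lceil\kappa\rceil\Lambda^{\lfloor\kappa\rfloor}-8T_\cA/(\kappa\Lambda)$. The real work is this moving-frontier invariant (Lemma~\ref{lem:after_iteration}) together with the $\phi'_r$ summation; your outline does not yet contain either ingredient.
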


\newcommand{\PRgraph}{F(\Gamma, \kappa, \Lambda)}
\subsection{Preliminary: the network $\PRgraph$}


\begin{figure*}[t]
  \centering
  \tiny
    {
    \psfrag{A}[c]{$\cH^1$}
    \psfrag{B}[c]{$\cH^2$}
    \psfrag{C}[c]{$\cP^1$}
    \psfrag{D}[c]{$\cP^2$}
    \psfrag{E}[c]{$\cP^\Gamma$}
    \psfrag{F}{$v^1_{-12, 1}$}
    \psfrag{G}{$v^1_{-12, 2}$}
    \psfrag{H}[r]{$v^1_{-\infty}$}
    \psfrag{I}{$v^2_{-12, 1}$}
    \psfrag{J}{$v^2_{-12, 2}$}
    \psfrag{K}{$v^2_{-\infty}$}
    \psfrag{L}{$v^\Gamma_{-12, 1}$}
    \psfrag{M}{$v^\Gamma_{-12, 2}$}
    \psfrag{N}[r]{$v^\Gamma_{-\infty}$}
    \psfrag{O}{$v^1_{12, 2}$}
    \psfrag{P}{$v^2_{12, 2}$}
    \psfrag{Q}{$v^\Gamma_{12, 2}$}
    \psfrag{R}{$v^1_{12, 1}$}
    \psfrag{U}{$v^2_{12, 1}$}
    \psfrag{V}{$v^\Gamma_{12, 1}$}
    \psfrag{W}{$v^1_{\infty}$}
    \psfrag{X}{$v^\Gamma_{\infty}$}
    \psfrag{S}[c]{$s$}
    \psfrag{T}[c]{$t$}
    \psfrag{a}[c]{$h_{-12}^1$}
    \psfrag{b}[c]{$h_{-10}^1$}
    \psfrag{c}[c]{$h_{-2}^1$}
    \psfrag{d}[c]{$h_{0}^1$}
    \psfrag{e}[c]{$h_{2}^1$}
    \psfrag{f}[c]{$h_{10}^1$}
    \psfrag{g}[c]{$h_{12}^1$}
    \psfrag{h}[c]{$h_{-12}^2$}
    \psfrag{i}[c]{$h_{-11}^2$}
    \psfrag{j}[c]{$h_{-10}^2$}
    \psfrag{k}[c]{$h_{-9}^2$}
    \psfrag{l}[c]{$h_{-2}^2$}
    \psfrag{m}[c]{$h_{-1}^2$}
    \psfrag{n}[c]{$h_{0}^2$}
    \psfrag{o}[c]{$h_{1}^2$}
    \psfrag{p}[c]{$h_{2}^2$}
    \psfrag{q}[c]{$h_{9}^2$}
    \psfrag{r}[c]{$h_{10}^2$}
    \psfrag{s}[c]{$h_{11}^2$}
    \psfrag{t}[l]{$h_{12}^2$}
    \psfrag{u}{$S_{9, 1}$}
    \psfrag{v}{$S_{7, 1}$}
    \psfrag{w}{$S_{-9, 5}$}
    \psfrag{x}{$S_{-9, 4}$}
    \psfrag{y}{$M^{\tau+1}(h^1_{-10}, h^1_{-8})$}
    \psfrag{z}{$M^{\tau+1}(h^2_{-10}, h^2_{-9})$}
    %
    %
    \includegraphics[width=0.9\linewidth]{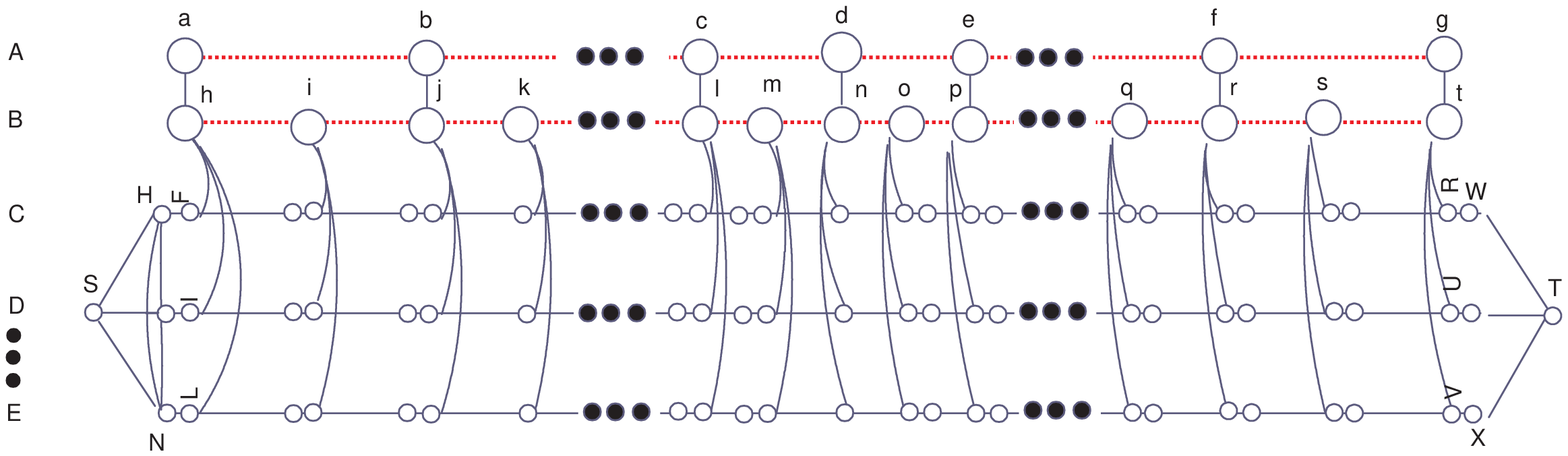}
    }
  \caption{\footnotesize An example of $\PRgraph$ where $\Lambda=2$ and $2\leq \kappa<3$. }\label{fig:graph_simpler}
\end{figure*}

Before we describe the construction of $\graph$, we first describe a network called $\PRgraph$ which is a slight modification of the network $F^K_m$ introduced in \cite{PelegR00}. In the next section, we show how we modify $\PRgraph$ to obtain $\graph$.

$\graph$ has three parameters, a real $\kappa\geq 1$ and two integers $\Gamma\geq 1$ and $\Lambda\geq 2$.\footnote{Note that we could restrict $\kappa$ to be an integer here since $F(\Gamma, \kappa, \Lambda)=F(\Gamma, \kappa', \Lambda)$ for any $\Lambda$, $\Gamma$, $\kappa$ and $\kappa'$ such that $\lfloor\kappa\rfloor=\lfloor\kappa'\rfloor$. However, we will need $\kappa$ to be a real when we define $\graph$ so we allow it to be a real here as well to avoid confusion.}
The two basic units in the construction of $\PRgraph$  are {\em highways} and {\em paths}.

\paragraph{Highways.} There are $\lfloor\kappa\rfloor$ highways, denoted by $\cH^1$, $\cH^2$, $\ldots$, $\cH^{\lfloor\kappa\rfloor}$. The highway $\cH^i$ is a path of $2\lceil\kappa\rceil\Lambda^i+1$ nodes, i.e.,
\begin{align*}
V(\cH^i) &= \{h_0^i, h_{\pm\Lambda^{\lfloor\kappa\rfloor-i}}^i, h_{\pm2\Lambda^{\lfloor\kappa\rfloor-i}}^i, h_{\pm3\Lambda^{\lfloor\kappa\rfloor-i}}^i, \dots, h_{\pm\lceil\kappa\rceil\Lambda^i\Lambda^{\lfloor\kappa\rfloor-i}}^i\}\\
E(\cH^i) &= \{(h^i_{-(j+1)\Lambda^{\lfloor\kappa\rfloor-i}}, h^i_{-j\Lambda^{\lfloor\kappa\rfloor-i}}), (h^i_{j\Lambda^{\lfloor\kappa\rfloor-i}}, h^i_{(j+1)\Lambda^{\lfloor\kappa\rfloor-i}}) \mid 0 \le j <\lceil\kappa\rceil\Lambda^i\}\,.
\end{align*}

We connect the highways by adding edges between nodes of the same subscripts, i.e., for any $0<i\leq \lfloor\kappa\rfloor$ and $-\lceil\kappa\rceil\Lambda^i \le j \le \lceil\kappa\rceil\Lambda^i$, there is an edge between $h^i_{j\Lambda^{\lfloor\kappa\rfloor-i}}$ and $h^{i+1}_{j\Lambda^{\lfloor\kappa\rfloor-i}}$.

For any $j\neq 0$, let 
\begin{align}
\phi_j=1 ~~\mbox{if $j=0$, and}~~ \phi'_j =\Lambda ~~\mbox{otherwise.}\label{eq:phi'_PelegR}
\end{align}

We use $\phi'_j$ to specify the number of nodes in the paths (defined next), i.e., each path will have $\sum_{j=-\lceil\kappa\rceil\Lambda^{\lfloor\kappa\rfloor}}^{\lceil\kappa\rceil\Lambda^{\lfloor\kappa\rfloor}} \phi'_j$ nodes. Note that
\begin{align}
\sum_{j=-\lceil\kappa\rceil\Lambda^{\lfloor\kappa\rfloor}}^{\lceil\kappa\rceil\Lambda^{\lfloor\kappa\rfloor}} \phi'_j
= (2\lceil\kappa\rceil\Lambda^{\lfloor\kappa\rfloor}+1) \Lambda
= \Theta(\kappa\Lambda^{\lfloor\kappa\rfloor+1}).\label{eq:sum_phi'_PelegR}
\end{align}

\paragraph{Paths.}
There are $\Gamma$ paths, denoted by $\cP^1, \cP^2, \ldots, \cP^{\Gamma}$. To construct each path, we first construct its subpaths as follows. For each node $h^{\lfloor\kappa\rfloor}_j$ in $\cH^{\lfloor\kappa\rfloor}$ and any $0< i\leq \Gamma$, we create a subpath of $\cP^i$, denoted by $\cP^i_j$, having $\phi'_j$ nodes. Denote nodes in $\cP^i_j$ in order by $v^i_{j, 1}, v^i_{j, 2}, \ldots, v^i_{j, \phi'_j}$. We connect these paths together to form $\cP^i_j$, i.e., for any $j\geq 0$, we create edges $(v^i_{j, \phi'_j}, v^i_{j+1, 1})$ and $(v^i_{-j, \phi'_{-j}}, v^i_{-(j+1), 1})$. Let
$$v^i_{-\infty}=v^i_{-\lceil\kappa\rceil\Lambda^{\lfloor\kappa\rfloor}, \phi'_{-\lceil\kappa\rceil\Lambda^{\lfloor\kappa\rfloor}}} ~~~\mbox{and}~~~ v^i_{\infty}=v^i_{\lceil\kappa\rceil\Lambda^{\lfloor\kappa\rfloor}, \phi'_{\lceil\kappa\rceil\Lambda^{\lfloor\kappa\rfloor}}}\,.$$
These two nodes can be thought of as the leftmost and rightmost nodes of path $\cP^i$. We connect the paths together by adding edges between the leftmost (rightmost, respectively) nodes in the paths, i.e., for any $i$ and $i'$, we add edges $(v^i_{-\infty}, v^{i'}_{-\infty})$ ($(v^i_{\infty}, v^{i'}_{\infty})$, respectively).

We connect the highways and paths by adding an edge from each node $h^{\lfloor\kappa\rfloor}_j$ to $v^i_{j, 1}$.
We also create nodes $s$ and $t$ and connect $s$ ($t$, respectively) to all nodes $v^i_{-\infty}$ ($v^i_\infty$, respectively).
See Fig.~\ref{fig:graph_simpler} for an example.

\subsection{Description of $\graph$}\label{subsec:graph_description}

We now modify $\PRgraph$ to obtain $\graph$. Again, $\graph$ has three parameters, a real $\kappa\geq 1$ and two integers $\Gamma\geq 1$ and $\Lambda\geq 2$. The two basic units in the construction of $\graph$ are {\em highways} and {\em paths}. The highways are defined in exactly the same way as before. The main modification is the definition of $\phi'$ (cf. Eq.~\eqref{eq:phi'_PelegR}) which affects the number of nodes in the subpaths $\cP^i_j$ of each path $\cP^i$.

\paragraph{Definition of $\phi'$.} First, for a technical reason in the proof of Theorem~\ref{thm:cc_to_distributed}, we need $\phi'_j$ to be small when $|j|$ is small. Thus, we define the following notation $\phi$. For any $j$, define 
\[\phi_j=\left\lfloor \frac{|j|}{\Lambda^{\lfloor\kappa\rfloor-1}}\right\rfloor + 1\,.\]
Note that $\phi_j$ can be viewed as the number of nodes in $\cH_1$ with subscripts between $0$ and $j$, i.e.,
%
%
\begin{equation*}
\phi_j=
\begin{cases}
|\{h^1_{j'} \mid 0\leq j'\leq j\}| &\mbox{if $j\geq 0$}\\
|\{h^1_{j'} \mid j\leq j'\leq 0\}| & \mbox{if $j<0$}\,.
\end{cases}
\end{equation*}
We now define $\phi'$ as follows. For any $j\geq 0$, let
%
%
\[\phi'_j = \phi'_{-j} = \min\left\{\phi_j, \max(1, \lceil\lceil\kappa\rceil\Lambda^\kappa\rceil-\sum_{j'>j} \phi_{j'})\right\}\,.\]
%
%
The reason we define $\phi'$ this way is that we use it to specify the number of nodes in the paths (as described in the previous section) and we want to be able to control this number precisely. In particular, while each path $\cP^i$ in $\PRgraph$ has $\Theta(\kappa\Lambda^{\lfloor\kappa\rfloor+1})$ nodes (cf. Eq.~\eqref{eq:sum_phi'_PelegR}), the number of nodes in each path in $\graph$ is 
\begin{align}
\sum_{j=-\lceil\kappa\rceil\Lambda^{\lfloor\kappa\rfloor}}^{\lceil\kappa\rceil\Lambda^{\lfloor\kappa\rfloor}} \phi'_j = \Theta(\kappa\Lambda^\kappa).\label{eq:sum_phi'}
\end{align}
We need this precision so that we can deal with any value of $\ell$ when we prove Theorem~\ref{thm:rw_lower_bound} in Section~\ref{sec:main_theorem}.


\begin{figure*}
  \centering
  \tiny
    {
    \psfrag{A}[c]{$\cH^1$}
    \psfrag{B}[c]{$\cH^2$}
    \psfrag{C}[c]{$\cP^1$}
    \psfrag{D}[c]{$\cP^2$}
    \psfrag{E}[c]{$\cP^\Gamma$}
    \psfrag{F}{$v^1_{-12, 1}$}
    \psfrag{G}{$v^1_{-12, 2}$}
    \psfrag{H}{$v^1_{-\infty}$}
    \psfrag{I}{$v^2_{-12, 1}$}
    \psfrag{J}{$v^2_{-12, 2}$}
    \psfrag{K}{$v^2_{-\infty}$}
    \psfrag{L}{$v^\Gamma_{-12, 1}$}
    \psfrag{M}{$v^\Gamma_{-12, 2}$}
    \psfrag{N}{$v^\Gamma_{-\infty}$}
    \psfrag{O}{$v^1_{12, 2}$}
    \psfrag{P}{$v^2_{12, 2}$}
    \psfrag{Q}{$v^\Gamma_{12, 2}$}
    \psfrag{R}{$v^1_{12, 1}$}
    \psfrag{U}{$v^2_{12, 1}$}
    \psfrag{V}{$v^\Gamma_{12, 1}$}
    \psfrag{W}{$v^1_{\infty}$}
    \psfrag{X}{$v^\Gamma_{\infty}$}
    \psfrag{S}[c]{$s$}
    \psfrag{T}[c]{$t$}
    \psfrag{a}[c]{$h_{-12}^1$}
    \psfrag{b}[c]{$h_{-10}^1$}
    \psfrag{c}[c]{$h_{-2}^1$}
    \psfrag{d}[c]{$h_{0}^1$}
    \psfrag{e}[c]{$h_{2}^1$}
    \psfrag{f}[c]{$h_{10}^1$}
    \psfrag{g}[c]{$h_{12}^1$}
    \psfrag{h}[c]{$h_{-12}^2$}
    \psfrag{i}[c]{$h_{-11}^2$}
    \psfrag{j}[c]{$h_{-10}^2$}
    \psfrag{k}[c]{$h_{-9}^2$}
    \psfrag{l}[c]{$h_{-2}^2$}
    \psfrag{m}[c]{$h_{-1}^2$}
    \psfrag{n}[c]{$h_{0}^2$}
    \psfrag{o}[c]{$h_{1}^2$}
    \psfrag{p}[c]{$h_{2}^2$}
    \psfrag{q}[c]{$h_{9}^2$}
    \psfrag{r}[c]{$h_{10}^2$}
    \psfrag{s}[c]{$h_{11}^2$}
    \psfrag{t}[c]{$h_{12}^2$}
    \psfrag{u}{$S_{9, 1}$}
    \psfrag{v}{$S_{7, 1}$}
    \psfrag{w}{$S_{-9, 5}$}
    \psfrag{x}{$S_{-9, 4}$}
    \psfrag{y}{$M^{\tau+1}(h^1_{-10}, h^1_{-8})$}
    \psfrag{z}{$M^{\tau+1}(h^2_{-10}, h^2_{-9})$}
    %
    %
    \includegraphics[width=\linewidth]{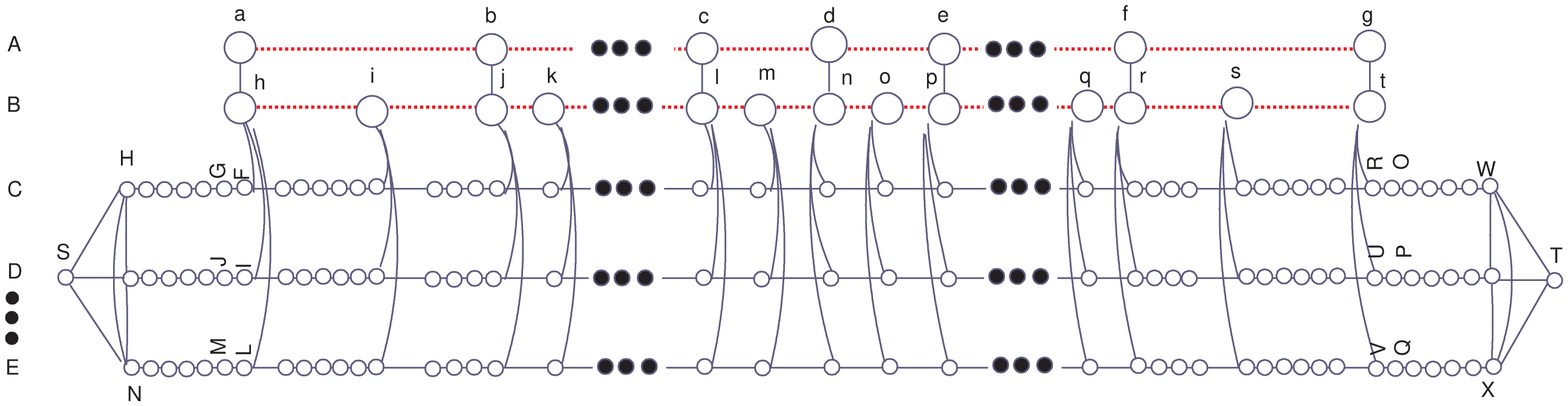}
    }
  \caption{\footnotesize An example of $\graph$ where $\kappa=2.5$ and $\Lambda=2$. The dashed edges (in red) have one copy while other edges have infinitely many copies. Note that $\phi'_{10}=4$ and thus there are $4$ nodes in each subpath $\cP^i_{10}$, for all $i$. Note also that $\phi'_{10}$ is less than $\phi_{10}$ which is $6$.}\label{fig:graph}
\end{figure*}

Finally, we make infinite copies of every edge except highway edges, i.e., those in $\cup_{i=1}^{\lfloor\kappa\rfloor} E(\cH^i)$. (In other words, we make them have infinite capacity). As mentioned earlier, we do this so that we will have a freedom to specify the number of copies later on when we prove Theorem~\ref{thm:rw_lower_bound} in Section~\ref{sec:main_theorem}. Observe that if Theorem~\ref{thm:cc_to_distributed} then it also holds when we set the numbers of edge copies in $\graph$ to some specific numbers.
Fig.~\ref{fig:graph} shows an example of $\graph$.

\subsection{Terminologies}

For any numbers $i$, $j$, $i'$, and $j'$, we say that $(i', j')\geq (i, j)$ if $i'>i$ or ($i'=i$ and $j'\geq j$).
For any $-\lceil\kappa\rceil\Lambda^{\lfloor\kappa\rfloor}\leq i\leq \lceil\kappa\rceil\Lambda^{\lfloor\kappa\rfloor}$ and $1\leq j\leq \phi'_i$, define the {\em $(i, j)$-set} as
\begin{equation*}
S_{i, j} =
\begin{cases}
\{h^x_{i'}\ |\ 1\leq x\leq \kappa,\ i'\leq i\}\cup\ \{v^x_{i', j'}\ |\  1\leq x\leq \Gamma,\ (i, j)\geq (i', j')\} \cup \{s\} &\text{if $i\geq 0$}\\
\{h^x_{i'}\ |\ 1\leq x\leq \kappa,\ i'\geq i\}\cup\ \{v^x_{i', j'}\ |\  1\leq x\leq \Gamma,\ (-i, j)\geq (-i', j')\} \cup \{r\} &\text{if $i<0$}\,.
\end{cases}
\end{equation*}
See Fig.~\ref{fig:simulation} for an example. For convenience, for any $i>0$, let 
$$S_{i, 0}=S_{i-1, \phi'_{i-1}}~~~~\mbox{and}~~~~S_{-i, 0}=S_{-(i-1), \phi'_{-(i-1)}}\,,$$ 
and, for any $j$, let 
$$S_{\lceil\kappa\rceil\Lambda^{\lfloor\kappa\rfloor}+1, j}=S_{\lceil\kappa\rceil\Lambda^{\lfloor\kappa\rfloor}, \phi'_{\lceil\kappa\rceil\Lambda^{\lfloor\kappa\rfloor}}}~~~~\mbox{and}~~~~S_{-\lceil\kappa\rceil\Lambda^{\lfloor\kappa\rfloor}-1, j}=S_{-\lceil\kappa\rceil\Lambda^{\lfloor\kappa\rfloor}, \phi'_{-\lceil\kappa\rceil\Lambda^{\lfloor\kappa\rfloor}}}\,.$$

Let $\mathcal{A}$ be any {\em deterministic} distributed algorithm run on $\graph$ for computing a function $f$. Fix any input strings $\bar{x}$ and $\bar{y}$ given to $s$ and $t$ respectively. Let $\varphi_\cA(\bar{x}, \bar{y})$ denote the execution of $\mathcal{A}$ on $\bar{x}$ and $\bar{y}$. Denote the {\em state} of the node $v$ at the end of time $\tau$ during the execution $\varphi_\cA(\bar{x}, \bar{y})$ by $\sigma_\cA(v, \tau, \bar{x}, \bar{y})$. Let $\sigma_\cA(v, 0, \bar{x}, \bar{y})$ be the state of the node $v$ before the execution $\varphi_\cA(\bar{x}, \bar{y})$ begins. Note that $\sigma_\cA(v, 0, \bar{x}, \bar{y})$ is independent of the input if $v\notin \{s, t\}$, depends only on $\bar{x}$ if $v=s$ and depends only on $\bar{y}$ if $v=t$. Moreover, in two different executions $\varphi_\cA(\bar{x}, \bar{y})$ and $\varphi_\cA(\bar{x}', \bar{y}')$, a node reaches the same state at time $\tau$ (i.e., $\sigma_\cA(v, \tau, \bar{x}, \bar{y})=\sigma_\cA(v, \tau, \bar{x}', \bar{y}')$) if and only if it receives the same sequence of messages on each of its incoming links.

For a given set of nodes $U=\{v_1, \ldots, v_\ell\}\subseteq V$, a {\em configuration}
%
\[C_\cA(U, \tau, \bar{x}, \bar{y}) = <\sigma_\cA(v_1, \tau, \bar{x}, \bar{y}), \ldots, \sigma_\cA(v_\ell, \tau, \bar{x}, \bar{y})>\]
%
%
is a vector of the states of the nodes of $U$ at the end of time $\tau$ of the execution $\varphi_\cA(\bar{x}, \bar{y})$.
From now on, to simplify notations, when $\cA$, $\bar{x}$ and $\bar{y}$ are clear from the context, we use $C^\tau_{i, j}$ to denote $C_\cA(S_{i, j}, \tau, \bar{x}, \bar{y})$.
%


\begin{figure*}
  \centering
  \tiny
    {
    \psfrag{A}[c]{$\cH^1$}
    \psfrag{B}[c]{$\cH^2$}
    \psfrag{C}[c]{$\cP^1$}
    \psfrag{D}[c]{$\cP^2$}
    \psfrag{E}[c]{$\cP^\Gamma$}
    \psfrag{F}{$v^1_{-11, 4}$}
    \psfrag{G}{$v^1_{-11, 5}$}
    \psfrag{H}{$S_{1, 1}$}
    \psfrag{I}{$v^2_{-11, 4}$}
    \psfrag{J}{$v^2_{-11, 5}$}
    \psfrag{K}{$v^2_{-\infty}$}
    \psfrag{L}{$v^\Gamma_{-11, 4}$}
    \psfrag{M}{$v^\Gamma_{-11, 5}$}
    \psfrag{N}{$v^\Gamma_{-\infty}$}
    \psfrag{O}{$v^1_{11, 1}$}
    \psfrag{P}{$v^2_{11, 1}$}
    \psfrag{Q}{$v^\Gamma_{11, 1}$}
    \psfrag{R}{$v^1_{9, 1}$}
    \psfrag{U}{$v^2_{9, 1}$}
    \psfrag{V}{$v^\Gamma_{9, 1}$}
    \psfrag{W}{$S_{-11, 0}=S_{-10, 4}$}
    \psfrag{X}{$v^\Gamma_{\infty}$}
    \psfrag{S}[c]{$s$}
    \psfrag{T}[c]{$t$}
    \psfrag{a}[c]{$h_{-12}^1$}
    \psfrag{b}[c]{$h_{-10}^1$}
    \psfrag{c}[c]{$h_{-2}^1$}
    \psfrag{d}[c]{$h_{0}^1$}
    \psfrag{e}[c]{$h_{2}^1$}
    \psfrag{f}[c]{$h_{10}^1$}
    \psfrag{g}[c]{$h_{12}^1$}
    \psfrag{h}[c]{$h_{-12}^2$}
    \psfrag{i}[c]{$h_{-11}^2$}
    \psfrag{j}[c]{$h_{-10}^2$}
    \psfrag{k}[c]{$h_{-9}^2$}
    \psfrag{l}[c]{$h_{-2}^2$}
    \psfrag{m}[c]{$h_{-1}^2$}
    \psfrag{n}[c]{$h_{0}^2$}
    \psfrag{o}[c]{$h_{1}^2$}
    \psfrag{p}[c]{$h_{2}^2$}
    \psfrag{q}[c]{$h_{9}^2$}
    \psfrag{r}[c]{$h_{10}^2$}
    \psfrag{s}[c]{$h_{11}^2$}
    \psfrag{t}[c]{$h_{12}^2$}
    \psfrag{u}{$S_{11, 1}$}
    \psfrag{v}{$S_{9, 1}$}
    \psfrag{w}{$S_{-11, 6}$}
    \psfrag{x}{$S_{-11, 5}$}
    \psfrag{y}{$M^{8}(h^1_{-12}, h^1_{-10})$}
    \psfrag{z}{$M^{8}(h^2_{-12}, h^2_{-11})$}
    \includegraphics[width=\linewidth]{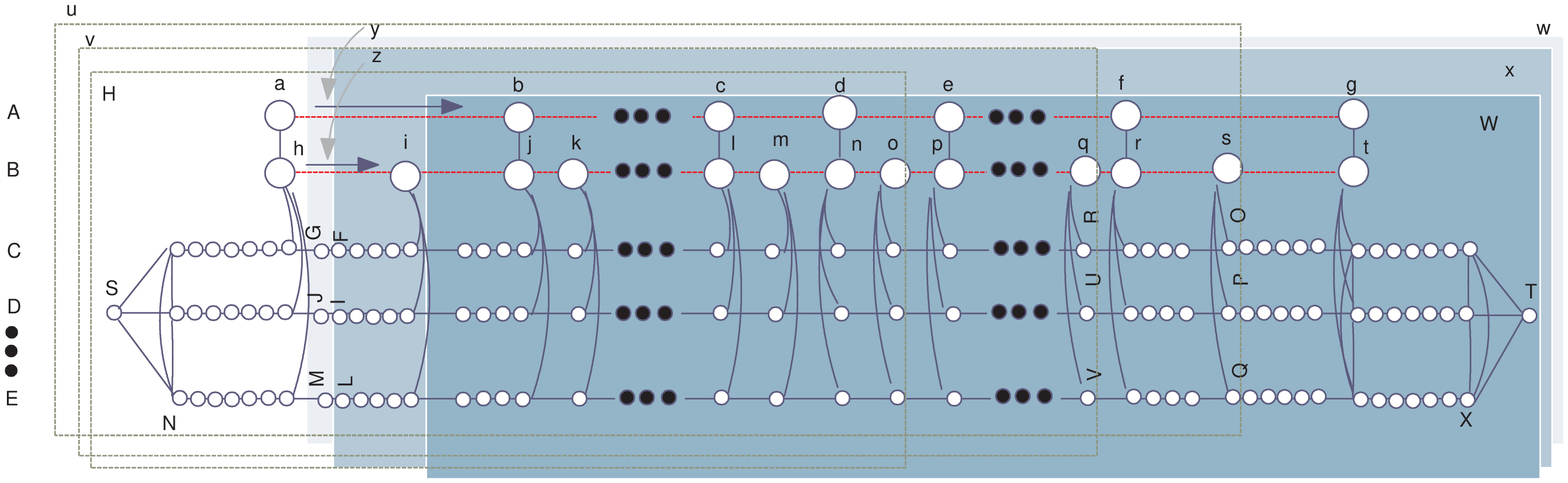}
    }
%
  \caption{\footnotesize An example of round $11$ in the proof of Theorem~\ref{thm:cc_to_distributed} (see detail in Example~\ref{ex:protocol}).} \label{fig:simulation}
\end{figure*}

\subsection{Proof of Theorem~\ref{thm:cc_to_distributed}}
%
%
Let $G=\graph$. Let $f$ be the function in the theorem statement. Let $\mathcal{A}_\epsilon$ be any $\epsilon$-error distributed algorithm for computing $f$ on $G$. Fix a random string $\bar{r}$ used by $\mathcal{A}_\epsilon$ (shared by all nodes in $G$) and consider the {\em deterministic} algorithm $\mathcal{A}$ run on the input of $\mathcal{A}_\epsilon$ and the fixed random string $\bar{r}$.
Let $T_{\mathcal{A}}$ be the worst case running time of algorithm $\mathcal{A}$ (over all inputs).
We only consider $T_\mathcal{A}\leq \kappa\Lambda^{\kappa}$, as assumed in the theorem statement.
We show that Alice and Bob, when given $\bar{r}$ as the public random string, can simulate $\mathcal{A}$ using $(2\kappa \log{n})T_\mathcal{A}$ communication bits in $8T_\mathcal{A}/(\kappa\Lambda)$ rounds, as follows. (We provide an example in the end of this section.)


\paragraph{Rounds, Phases, and Iterations.}
For convenience, we will name the rounds backward, i.e., Alice and Bob start at round $\lceil\kappa\rceil\Lambda^{\lfloor\kappa\rfloor}$ and proceed to round $\lceil\kappa\rceil\Lambda^{\lfloor\kappa\rfloor}-1$, $\lceil\kappa\rceil\Lambda^{\lfloor\kappa\rfloor}-2$, and so on.
Each round is divided into two {\em phases}, i.e., when Alice sends messages and Bob sends messages (recall that Alice sends messages first in each iteration). Each phase of round $r$ is divided into $\phi'_r$ {\em iterations}. Each iteration simulates one round of algorithm $\cA$.
We call the $i^{th}$ iteration of round $r$ when Alice (Bob, respectively) sends messages the {\em iteration $I_{r, A, i}$} ($I_{r, B, i}$, respectively). Therefore, in each round $r$ we have the following order of iterations: $I_{r, A, 1}$, $I_{r, A, 2}$, $\ldots$, $I_{r, A, \phi'_r}$, $I_{r, B, 1}$, $\ldots$, $I_{r, B, \phi'_r}$. For convenience, we refer to the time before communication begins as round $\lceil\kappa\rceil\Lambda^{\lfloor\kappa\rfloor}+1$ and let $I_{r, A, 0}=I_{r+1, A, \phi'_{r+1}}$ and $I_{r, B, 0}=I_{r+1, B, \phi'_{r+1}}$.

Our goal is to simulate one round of algorithm $\cA$ per iteration. That is, after iteration $I_{r, B, i}$ finishes, we will finish the $(\sum_{r'=r+1}^{\lceil\kappa\rceil\Lambda^{\lfloor\kappa\rfloor}} \phi'_{r'}+i)^{th}$ round of algorithm $\cA$. Specifically, we let
$$t_r=\sum_{r'=r+1}^{\lceil\kappa\rceil\Lambda^{\lfloor\kappa\rfloor}} \phi'_{r'}$$
and our goal is to construct a protocol with properties as in the following lemma.

\begin{lemma}\label{lem:after_iteration} There exists a protocol such that there are at most $\kappa\log n$ bits sent in each iteration and satisfies the following properties. For any $r\geq 0$ and $0\leq i\leq \phi'_r$,
\begin{enumerate}
\item after $I_{r, A, i}$ finishes, Alice and Bob know $C^{t_r+i}_{r-i\Lambda^{\lfloor\kappa\rfloor-1}, 1}$ and $C^{t_r+i}_{-r, \phi'_{-r}-i}$, respectively, and \label{property:alice}
\item after $I_{r, B, i}$ finishes, Alice and Bob know $C^{t_r+i}_{r, \phi'_{r}-i}$ and $C^{t_r+i}_{-r+i\Lambda^{\lfloor\kappa\rfloor-1}, 1}$, respectively. \label{property:bob}
\end{enumerate}
\end{lemma}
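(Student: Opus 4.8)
The plan is to prove Lemma~\ref{lem:after_iteration} by induction on the pair $(r, i)$ ordered so that iterations proceed in the order $I_{r, A, 1}, \ldots, I_{r, A, \phi'_r}, I_{r, B, 1}, \ldots, I_{r, B, \phi'_r}$ for decreasing $r$. The base case is the state before communication begins (round $\lceil\kappa\rceil\Lambda^{\lfloor\kappa\rfloor}+1$), where Alice knows the initial states of all nodes in $S_{+\infty}$-type sets since those states depend only on $\bar{x}$ (or are input-independent), and symmetrically for Bob; this uses the observation recorded in the ``Terminologies'' subsection that $\sigma_\cA(v, 0, \bar{x}, \bar{y})$ depends only on $\bar{x}$ when $v = s$, only on $\bar{y}$ when $v = t$, and on neither otherwise. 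The heart of the argument is the inductive step: assuming (say) that after $I_{r, B, i-1}$ Alice knows the configuration $C^{t_r + i - 1}_{r, \phi'_r - i + 1}$ of the ``left'' set $S_{r, \phi'_r - i + 1}$, I want to show that after the next Alice-iteration she can compute the configuration of a slightly larger set one time step later, after Bob sends her only $O(\kappa\log n)$ bits (the states of the $O(\kappa)$ highway nodes on the cut boundary, each of which is $O(\log n)$ bits because highway edges carry $O(\log n)$-bit messages and a node's relevant state is determined by the messages it has received).

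The key structural fact I would isolate and prove first is a \emph{locality/cut} lemma: the state $\sigma_\cA(v, \tau+1, \bar{x}, \bar{y})$ of any node $v$ is determined by the configuration $C_\cA(N[v], \tau, \bar{x}, \bar{y})$ of its closed neighborhood at time $\tau$ together with the messages crossing into $v$; hence the configuration of a set $U$ at time $\tau+1$ is determined by the configuration at time $\tau$ of $U$ together with all nodes adjacent to $U$, \emph{but} the only information that must cross a cut separating $U$ from the rest is whatever travels along edges crossing that cut. The whole point of the graph $\graph$ (as opposed to the networks of \cite{DasSarmaHKKNPPW10,PelegR00}) is that the sets $S_{i,j}$ are designed so that the cut between $S_{i,j}$ and its complement crosses only highway edges — all other (path) edges have infinite capacity but lie strictly inside or strictly outside the set — and there are only $O(\kappa)$ such highway edges (one per highway level), each carrying $O(\log n)$ bits per round. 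So simulating one round of $\cA$ by advancing the ``known'' set by one step (and shifting its boundary by one highway-index unit, which is $\Lambda^{\lfloor\kappa\rfloor-1}$ on $\cH^1$ and correspondingly on the paths) costs exactly one message of $O(\kappa\log n)$ bits. I would make this precise by checking, from the explicit definition of $S_{i,j}$ and the edge set of $\graph$, exactly which edges leave $S_{r, \phi'_r - i}$: the path edge $(v^x_{r, \phi'_r - i}, v^x_{r, \phi'_r - i + 1})$ and the highway edges at subscript boundary, and confirming the count is $\lfloor\kappa\rfloor + 1 = O(\kappa)$ highway-type edges whose messages are the only thing Bob needs to send.

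Then I would assemble the induction: each Alice-iteration $I_{r,A,i}$ consists of (a) Bob sending Alice the $O(\kappa\log n)$ bits on the cut edges as determined by the configuration Bob currently knows, and (b) Alice locally computing one step of $\cA$ for all nodes in the (enlarged, boundary-shifted) set, which she can do because she now knows the old configuration of that set plus the incoming cut messages; the indices work out to give exactly $C^{t_r+i}_{r - i\Lambda^{\lfloor\kappa\rfloor-1}, 1}$ as stated. The Bob-iterations are symmetric with the roles and the sign of the index reversed. Counting iterations: round $r$ has $2\phi'_r$ iterations, and $\sum_r 2\phi'_r = \Theta(\kappa\Lambda^\kappa) \geq T_\cA$ iterations suffice to finish the simulation; since each round of communication packs $\phi'_r = \Theta(\kappa\Lambda)$ iterations into a constant number of message exchanges... actually I need to be careful here — each iteration is one round-trip? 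No: the structure is that \emph{one communication round} (Alice$\to$Bob then Bob$\to$Alice) contains $\phi'_r$ Alice-iterations followed by $\phi'_r$ Bob-iterations, but each iteration still needs a message. Re-reading: Alice sends \emph{one} message per round and it must contain everything for all $\phi'_r$ of her iterations at once. So the right framing is that the messages of iterations $I_{r,A,1},\ldots,I_{r,A,\phi'_r}$ are batched into Alice's single round-$r$ message — this works because after $I_{r,B,0} = I_{r+1,B,\phi'_{r+1}}$ Alice already knows everything she needs to run all $\phi'_r$ of her iterations without further input from Bob (her set only grows leftward, away from Bob), and the total she sends is $\phi'_r \cdot O(\kappa\log n)$ bits. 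So the round count is the number of distinct values of $r$, namely $\lceil\kappa\rceil\Lambda^{\lfloor\kappa\rfloor}$, but we only run until $\cA$ halts at time $T_\cA$, which happens after $t_r + i$ reaches $T_\cA$, i.e., after about $T_\cA / \phi'_r \approx T_\cA/(\kappa\Lambda)$ rounds, with the constant $8$ absorbing the $\lceil\cdot\rceil$'s and the two phases; the total bit count is then $T_\cA \cdot O(\kappa\log n)$ as claimed. The \textbf{main obstacle} I anticipate is getting the index bookkeeping exactly right — verifying that after $I_{r,A,i}$ the boundary of Alice's known set sits precisely at $(r - i\Lambda^{\lfloor\kappa\rfloor-1}, 1)$ and that this set's outgoing edges are still only $O(\kappa)$ highway edges, which requires carefully tracking how a one-step advance of the simulation front corresponds to a $\Lambda^{\lfloor\kappa\rfloor-1}$-shift at highway level $1$ but a single-node shift along the paths, and confirming the batching argument (that Alice genuinely needs no mid-round input from Bob) against the precise adjacency structure of $\graph$.
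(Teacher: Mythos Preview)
Your approach is essentially the paper's: induct on iterations, use that the cut bounding $S_{i,j}$ meets only $\lfloor\kappa\rfloor$ highway edges (each carrying $O(\log n)$ bits), and batch all of one party's iterations in a round into a single message because the sender needs nothing from the other side. You also correctly identify that the bookkeeping is the only real work.

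One point of confusion worth flagging before you write it up: you describe Alice's set as ``enlarged'' and ``growing leftward,'' and you initially have Bob sending to Alice during $I_{r,A,i}$. Both are backwards. During Alice's iterations her index goes from $r-(i-1)\Lambda^{\lfloor\kappa\rfloor-1}$ to $r-i\Lambda^{\lfloor\kappa\rfloor-1}$, so her set \emph{shrinks}; the shrinkage (by exactly the longest highway hop) is precisely what guarantees that every neighbor of her new set was already in her old set, so she can advance time with \emph{zero} incoming bits. Meanwhile Bob's set shrinks by only one path-node, which is not enough to swallow the highway neighbors, and \emph{that} is why Alice must send him the $\lfloor\kappa\rfloor$ highway messages. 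You eventually land on the right picture (``Alice already knows everything she needs\ldots without further input from Bob''), but make sure the written proof has the directions straight from the start --- the asymmetry between the fast-shrinking party (self-sufficient) and the slow-shrinking party (needs the cut messages) is the whole mechanism.
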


\begin{proof}
We first argue that the properties hold for iteration $I_{\lceil\kappa\rceil\Lambda^{\lfloor\kappa\rfloor}+1, A, 0}$, i.e., before Alice and Bob starts communicating. After round $r=\lceil\kappa\rceil\Lambda^{\lfloor\kappa\rfloor}$ starts, Alice can compute $C^0_{r+1, 0}=C^0_{r+1, 1}=C^0_{r, \phi'_{r}}$ which contains the states of all nodes in $\graph$ except $t$. She can do this because every node except $s$ and $t$ has the same state regardless of the input and the state of $s$ depends only on her input string $\bar{x}$.  Similarly, Bob can compute $C^0_{-(r+1), 0}=C^0_{-(r+1), 1}=C^0_{r, \phi'_{r}}$  which depends only on his input $\bar{y}$.

Now we show that, if the lemma holds for any iteration $I_{r, A, i-1}$ then it also holds for iteration $I_{r, A, i}$ as well. Specifically, we show that if Alice and Bob know $C^{t_r+i-1}_{r-(i-1)\Lambda^{\lfloor\kappa\rfloor-1}, 1}$ and $C^{t_r+i-1}_{-r, \phi'_{-r}-(i-1)}$, respectively, then they will know $C^{t_r+i}_{r-i\Lambda^{\lfloor\kappa\rfloor-1}, 1}$ and $C^{t_r+i}_{-r, \phi'_{-r}-i}$, respectively, after Alice sends at most $\kappa\log n$ messages.

First we show that Alice can compute $C^{t_r+i}_{r-i\Lambda^{\lfloor\kappa\rfloor-1}, 1}$ without receiving any message from Bob. Recall that Alice can compute $C^{t_r+i}_{r-i\Lambda^{\lfloor\kappa\rfloor-1}, 1}$ if she knows
\squishlist
\item $C^{t_r+i-1}_{r-i\Lambda^{\lfloor\kappa\rfloor-1}, 1}$, and
\item all messages sent to all nodes in $S_{r-i\Lambda^{\lfloor\kappa\rfloor-1}, 1}$ at time $t_r+i$ of algorithm $\cA$.
\squishend
By assumption, Alice knows $C^{t_r+i-1}_{r-(i-1)\Lambda^{\lfloor\kappa\rfloor-1}, 1}$ which implies that she knows $C^{t_r+i-1}_{r-i\Lambda^{\lfloor\kappa\rfloor-1}, 1}$ since
$$S_{r-i\Lambda^{\lfloor\kappa\rfloor-1}, 1} \subseteq S_{r-(i-1)\Lambda^{\lfloor\kappa\rfloor-1}, 1}\,.$$
Moreover, observe that all neighbors of all nodes in  $S_{r-i\Lambda^{\lfloor\kappa\rfloor-1},1}$ are in $S_{r-(i-1)\Lambda^{\lfloor\kappa\rfloor-1}, 1}$. Thus, Alice can compute all messages sent to all nodes in $S_{r-i\Lambda^{\lfloor\kappa\rfloor-1}, 1}$ at time $t_r+i$ of algorithm $\cA$. Therefore, Alice can compute $C^{t_r+i}_{r+i\Lambda^{\lfloor\kappa\rfloor-1}, 1}$ without receiving any message from Bob.

Now we show that Bob can compute $C^{t_r+i}_{-r, \phi'_{-r}-i}$ by receiving at most $\kappa\log n$ bits from Alice and use the knowledge of $C^{t_r+i-1}_{-r, \phi'_{-r}-i+1}$. Note that Bob can compute $C^{t_r+i}_{-r, \phi'_{-r}-i}$ if he knows
\squishlist
\item $C^{t_r+i-1}_{-r, \phi'_{-r}-i}$, and
\item all messages sent to all nodes in $S_{-r, \phi'_{-r}-i}$ at time $t_r+i$ of algorithm $\cA$.
\squishend
By assumption, Bob knows $C^{t_r+i-1}_{-r, \phi'_{-r}-i+1}$ which implies that he knows $C^{t_r+i-1}_{-r, \phi'_{-r}-i}$ since $S_{-r, \phi'_{-r}-i}\subseteq S_{-r, \phi'_{-r}-i+1}$.
Moreover, observe that all neighbors of all nodes in $S_{-r, \phi'_{-r}-i}$ are in $S_{-r, \phi'_{-r}-i+1}$, except
\begin{align*}
&h^{\lfloor\kappa\rfloor}_{-(r+1)}, h^{\lfloor\kappa\rfloor-1}_{-(\lfloor r/\Lambda\rfloor+1)}, \ldots, h^{\lfloor\kappa\rfloor-i}_{-(\lfloor r/\Lambda^i\rfloor+1)}, \ldots, h^1_{-(\lfloor r/\Lambda^{\lfloor\kappa\rfloor-1}\rfloor+1)}.
\end{align*}
%
%
In other words, Bob can compute all messages sent to all nodes in $S_{-r, \phi'_{-r}-i}$ at time $t_r+i$ except
\begin{align*}
&M^{t_r+i}(h^{\lfloor\kappa\rfloor}_{-(r+1)}, h^{\lfloor\kappa\rfloor}_{-r}), \ldots, M^{t_r+i}(h^{\lfloor\kappa\rfloor-i}_{-(\lfloor r/\Lambda^i\rfloor+1)}, h^{\lfloor\kappa\rfloor-i}_{-\lfloor r/\Lambda^i\rfloor}),
\ldots, M^{t_r+i}(h^1_{-(\lfloor r/\Lambda^{\lfloor\kappa\rfloor-1}\rfloor+1)}, h^1_{-(\lfloor r/\Lambda^{\lfloor\kappa\rfloor-1}\rfloor})
\end{align*}
%
%
where $M^{t_r+i}(u, v)$ is the message sent from $u$ to $v$ at time $t_r+i$ of algorithm $\cA$. Observe further that Alice can compute these messages because she knows $C^{t_r+i-1}_{r-(i-1)\Lambda^{\lfloor\kappa\rfloor-1}, 1}$ which contains the states of
\begin{align*}
h^{\lfloor\kappa\rfloor}_{-(r+1)}, \ldots, h^{\lfloor\kappa\rfloor-i}_{-(\lfloor r/\Lambda^i\rfloor+1)}, \ldots,~~~h^1_{-(\lfloor r/\Lambda^{\lfloor\kappa\rfloor-1}\rfloor+1)}
\end{align*}
%
%
at time $t_r+i-1$. (In particular, $C^{t_r+i-1}_{r-(i-1)\Lambda^{\lfloor\kappa\rfloor-1}, 1}$ is a superset of $C^{t_r+i-1}_{0, 1}$ which contains the states of $h^{\lfloor\kappa\rfloor}_{-(r+1)}$, $\ldots$, $h^1_{-(\lfloor r/\Lambda^{\lfloor\kappa\rfloor-1}\rfloor+1)}$.) So, Alice can send these messages to Bob and Bob can compute $C^{t_r+i}_{-r, \phi'_{-r}-i}$ at the end of the iteration. Each of these messages contains at most $\log n$ bits since each of them corresponds to a message sent on one edge. Therefore, Alice sends at most $\kappa \log n$ bits to Bob in total. This shows the first property.

After Alice finishes sending messages, the two parties will switch their roles and a similar protocol can be used to show that the second property, i.e., if the lemma holds for any iteration $I_{r, B, i-1}$ then it also holds for iteration $I_{r, B, i}$ as well. That is, if Alice and Bob know $C^{t_r+i-1}_{r, \phi'_{r}-(i-1)}$ and $C^{t_r+i-1}_{-r+(i-1)\Lambda^{\lfloor\kappa\rfloor-1}, 1}$, respectively, then Bob can send $\kappa \log n$ bits to Alice so that they can compute $C^{t_r+i}_{r, \phi'_{r}-i}$ and $C^{t_r+i}_{-r+i\Lambda^{\lfloor\kappa\rfloor-1}, 1}$, respectively.
\end{proof}




Let $P$ be the protocol as in Lemma~\ref{lem:after_iteration}. Alice and Bob will run protocol $P$ until round $r'$, where $r'$ is the largest number such that $t_{r'}+\phi'_{r'}\geq T_\cA$.
Lemma~\ref{lem:after_iteration} implies that after iteration $I_{r', B, T_\cA-t_{r'}}$, Bob knows $$C^{t_{-r'}+T_\cA-t_{r'}}_{-r', \phi'_{-r'}-T_\cA+t_{r'}}=C^{T_\cA}_{-r', \phi'_{-r'}-T_\cA+t_{r'}}$$
(note that $\phi'_{-r'}-T_\cA+t_{r'}\geq 0$).
%
%
In particular, Bob knows the state of node $t$ at time $T_\cA$, i.e., he knows $\sigma_\cA(t, T_\cA, \bar{x}, \bar{y})$. Thus, Bob can output the output of $\cA$ which is output from $t$.

Since $\mathcal{A}_\epsilon$ is $\epsilon$-error, the probability (over all possible shared random strings) that $\mathcal{A}$ outputs the correct value of $f(\bar{x}, \bar{y})$ is at least $1-\epsilon$. Therefore, the communication protocol run by Alice and Bob is $\epsilon$-error as well. The number of rounds is bounded as in the following claim.

\begin{claim}
If algorithm $\cA$ finishes in time $T_\cA\leq \lceil\kappa\rceil\Lambda^\kappa$ then $r'>\lceil\kappa\rceil\Lambda^\kappa-8T_\cA/(\lceil\kappa\rceil\Lambda)$. In other words, the number of rounds Alice and Bob need to simulate $\cA$ is $8T_\cA/(\lceil\kappa\rceil\Lambda)$
\end{claim}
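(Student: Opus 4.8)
The plan is to translate the defining property of $r'$ into an estimate on partial sums of the sequence $(\phi'_j)$ and then to exploit the closed form of $\phi'$. Write $N=\lceil\kappa\rceil\Lambda^{\lfloor\kappa\rfloor}$ for the label of the first round. Since $t_r=\sum_{j=r+1}^{N}\phi'_j$ we have $t_r+\phi'_r=\sum_{j=r}^{N}\phi'_j=t_{r-1}$, so the choice of $r'$ as the largest $r$ with $t_r+\phi'_r\ge T_\cA$ is precisely
\[\sum_{j=r'+1}^{N}\phi'_j \;<\; T_\cA \;\le\; \sum_{j=r'}^{N}\phi'_j .\]
The protocol uses the rounds labelled $N,N-1,\dots,r'$, i.e.\ $N-r'+1$ of them, so it suffices to prove $N-r'<8T_\cA/(\lceil\kappa\rceil\Lambda)$; I shall also verify $\sum_{j=1}^{N}\phi'_j\ge T_\cA$, which guarantees $r'$ is well defined and $r'\ge1$.

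Next I analyze $(\phi'_j)_{1\le j\le N}$. Put $c_j:=\lceil\lceil\kappa\rceil\Lambda^{\kappa}\rceil-\sum_{j'=j+1}^{N}\phi_{j'}$, so that $\phi'_j=\min\{\phi_j,\max(1,c_j)\}$ and $c_j=c_{j-1}+\phi_j$. Each $c_j$ is an integer and, since $\phi_j\ge1$, the sequence $(c_j)$ is nondecreasing; moreover if $c_j<\phi_j$ then $c_{j-1}=c_j-\phi_j<0\le\phi_{j-1}-1$, so $c_{j-1}<\phi_{j-1}$ as well. Hence $\{j:c_j<\phi_j\}$ is an initial segment $[1,J-1]$ of $[1,N]$ (with $J=N+1$ if this set is all of $[1,N]$; in that borderline case $c_N=\lceil\lceil\kappa\rceil\Lambda^{\kappa}\rceil>0$ together with $c_N<\phi_N$ forces $\phi'_N=c_N=\lceil\lceil\kappa\rceil\Lambda^{\kappa}\rceil\ge T_\cA$, whence $r'=N$ and the claim is immediate, so assume $J\le N$). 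For $j\ge J$ we have $c_j\ge\phi_j\ge1$, so $\phi'_j=\phi_j$; for $j\le J-2$ we have $c_j\le c_{J-2}=c_{J-1}-\phi_{J-1}<0$, so $\phi'_j=1$; and $\phi'_{J-1}=\max(1,c_{J-1})$. Considering the two cases $c_{J-1}\ge1$ (then $\phi'_{J-1}=c_{J-1}$) and $c_{J-1}\le0$ (then $\phi'_{J-1}=1$ and $\sum_{j=J}^{N}\phi_j\ge\lceil\lceil\kappa\rceil\Lambda^{\kappa}\rceil$) gives in both cases
\[\sum_{j=J-1}^{N}\phi'_j=\phi'_{J-1}+\sum_{j=J}^{N}\phi_j\ \ge\ \lceil\lceil\kappa\rceil\Lambda^{\kappa}\rceil\ \ge\ \lceil\kappa\rceil\Lambda^{\kappa},\]
which in particular yields the promised $\sum_{j=1}^{N}\phi'_j\ge\lceil\kappa\rceil\Lambda^{\kappa}\ge T_\cA$.

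Now compare $r'$ with $J-1$. If $r'\le J-2$, then $[r'+1,N]\supseteq[J-1,N]$, so $\sum_{j=r'+1}^{N}\phi'_j\ge\lceil\kappa\rceil\Lambda^{\kappa}\ge T_\cA$, contradicting the left inequality of the characterization of $r'$; hence $r'\ge J-1$, and then every $j$ with $r'+1\le j\le N$ satisfies $j\ge J$, so $\phi'_j=\phi_j=\lfloor j/\Lambda^{\lfloor\kappa\rfloor-1}\rfloor+1\ge j/\Lambda^{\lfloor\kappa\rfloor-1}$. Summing and using $r'\ge0$ and $N=\lceil\kappa\rceil\Lambda^{\lfloor\kappa\rfloor}$,
\[T_\cA\ >\ \sum_{j=r'+1}^{N}\phi'_j\ \ge\ \frac{1}{\Lambda^{\lfloor\kappa\rfloor-1}}\sum_{j=r'+1}^{N}j\ =\ \frac{(N-r')(N+r'+1)}{2\Lambda^{\lfloor\kappa\rfloor-1}}\ \ge\ \frac{(N-r')\,N}{2\Lambda^{\lfloor\kappa\rfloor-1}}\ =\ \frac{(N-r')\lceil\kappa\rceil\Lambda}{2}.\]
Therefore $N-r'<2T_\cA/(\lceil\kappa\rceil\Lambda)$, i.e.\ $r'>N-8T_\cA/(\lceil\kappa\rceil\Lambda)$, and the round count is $N-r'+1<2T_\cA/(\lceil\kappa\rceil\Lambda)+1\le 8T_\cA/(\lceil\kappa\rceil\Lambda)$ as soon as $T_\cA$ exceeds a fixed constant times the diameter $\Theta(\kappa\Lambda)$ — the only regime in which the bound is meant to be read literally — which proves the claim.

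The part I expect to require genuine care is the structural analysis of $(\phi'_j)$ in the second step: that $\{j:c_j<\phi_j\}$ is an initial segment and that $\sum_{j=J-1}^{N}\phi'_j=\lceil\lceil\kappa\rceil\Lambda^{\kappa}\rceil$, both resting on the integrality and monotonicity of $c_j$, together with the separate treatment of the borderline regime $\kappa$ close to $1$ where the ``untruncated'' tail $\{j\ge J\}$ may be empty. Everything else is the elementary quadratic-sum estimate above, whose only quantitative input is that $\phi_j$ grows linearly in $j$ and is of order $\lceil\kappa\rceil\Lambda$ near $j=N$, so that the top $\Theta(N-r')$ terms alone already contribute $\Theta((N-r')\lceil\kappa\rceil\Lambda)$.
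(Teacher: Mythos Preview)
Your proof is correct and follows the same strategy as the paper: both arguments establish that $\phi'_r=\phi_r$ throughout the relevant top range of indices and then use the linear growth $\phi_r\approx r/\Lambda^{\lfloor\kappa\rfloor-1}$ to bound the partial sum quadratically. Your explicit structural analysis of $(\phi'_j)$---identifying the threshold $J$, showing $\{j:c_j<\phi_j\}$ is an initial segment, and deducing $r'\ge J-1$---makes rigorous a step the paper compresses into the single line ``because $\sum_{r\ge r^*}\phi'_r<\lceil\kappa\rceil\Lambda^\kappa$,'' and your direct argument (rather than contradiction) yields the sharper constant $2$ in place of $8$.
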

\begin{proof}
Let $R^*=8T_\cA/(\lceil\kappa\rceil\Lambda)$ and let $r^*=\Lambda^{\lfloor\kappa\rfloor}-R^*+1$. Assume for the sake of contradiction that Alice and Bob need more than $R^*$ rounds. This means that $r'<r^*$.
%
%
%
%
Alice and Bob requiring more than $R^*$ rounds implies that 
%
\begin{align}
\sum_{r=r^*}^{\lceil\kappa\rceil\Lambda^{\lfloor\kappa\rfloor}} \phi'_r = t_{r^*}+\phi'_{r^*} < T_\cA \leq \lceil\kappa\rceil\Lambda^\kappa\,.\label{eq:round_bound}
\end{align}
It follows that for any $r\geq r^*$,
\begin{align}
\phi'_r &= \min\left(\phi(h^{k'}_{r}), \max(1, \lceil\lceil\kappa\rceil\Lambda^\kappa\rceil-\sum_{r'>r} \phi_{r'})\right) \label{eq:a1}\\
&= \phi_r \label{eq:a2}\\
&= \left\lfloor \frac{r}{\Lambda^{\lfloor\kappa\rfloor-1}}\right\rfloor +1 \label{eq:a3}
\end{align}
where Eq.~\eqref{eq:a1} follows from the definition of $\phi'_r$, Eq.~\eqref{eq:a2} is because $\sum_{r\geq r^*} \phi'_r< \lceil\kappa\rceil\Lambda^\kappa$, and Eq.~\eqref{eq:a3} is by the definition of $\phi_r$.
%
%
%
Therefore, the total number of steps that can be simulated by Alice and Bob up to round $r^*$ is
\begin{align*}
\sum_{r=r^*}^{\lceil\kappa\rceil\Lambda^{\lfloor\kappa\rfloor}} \phi'_r & = \sum_{r=r^*}^{\lceil\kappa\rceil\Lambda^{\lfloor\kappa\rfloor}} \left(\left\lfloor \frac{r}{\Lambda^{\lfloor\kappa\rfloor-1}}\right\rfloor +1\right)\\
&\geq \Lambda^{\lfloor\kappa\rfloor-1} \sum_{i=1}^{\lfloor R^*/\Lambda^{\lfloor\kappa\rfloor-1}\rfloor} (\lceil\kappa\rceil\Lambda-i)\\
&\geq \Lambda^{\lfloor\kappa\rfloor-1} \cdot \frac{\lfloor R^*/\Lambda^{\lfloor\kappa\rfloor-1}\rfloor (\lceil\kappa\rceil\Lambda-1)}{2}\\
&\geq \frac{R^*\lceil\kappa\rceil\Lambda}{8}\\
&\geq T_\cA
\end{align*}
contradicting Eq.~\eqref{eq:round_bound}.
\end{proof}

%

Since there are at most $\kappa \log n$ bits sent in each iteration and Alice and Bob runs $P$ for $T_\cA$ iterations, the total number of bits exchanged is at most $(2\kappa\log n)T_\mathcal{A}$. This completes the proof of Theorem~\ref{thm:cc_to_distributed}.

\begin{example}\label{ex:protocol}
Fig.~\ref{fig:simulation} shows an example of the protocol we use above. Before iteration $I_{11, A, 1}$ begins, Alice and Bob know $C^{7}_{11, 1}$ and $C^7_{-11, 5}$, respectively (since Alice and Bob already simulated $\cA$ for $\phi'_{12}=7$ steps in round $12$). Then, Alice computes and sends $M^{8}(h^2_{-12}, h^2_{-11})$ and $M^{8}(h^1_{-12}, h^1_{-10})$ to Bob.  Alice and Bob then compute $C^{8}_{11, 1}$ and $C^{8}_{-11, 6}$, respectively, at the end of iteration $I_{11, A, 1}$. After they repeat this process for five more times, i.e. Alice sends
$$M^{9}(h^2_{-12}, h^2_{-11}), M^{10}(h^2_{-12}, h^2_{-11}), \ldots, M^{13}(h^2_{-12}, h^2_{-11}), ~~~\mbox{and}~~~$$
$$M^{9}(h^1_{-12}, h^1_{-10}), M^{10}(h^1_{-12}, h^1_{-10}), \ldots, M^{13}(h^1_{-12}, h^1_{-10})\,,$$
Bob will be able to compute $C^{13}_{-11, 0}=C^{13}_{-10, 4}$. Note that Alice is able to compute $C^8_{9, 1}$, $C^9_{7, 1}$, $\ldots$, $C^{12}_{1, 1}$ without receiving any messages from Bob so she can compute and send the previously mentioned messages to Bob.
\end{example}

\section{The pointer chasing problem} \label{sec:pointer_chasing}


In this section, we define the pointer chasing problem and prove its lower bound (Lemma~\ref{lem:PC_dist_lowerbound}) which will be used to prove Theorem~\ref{thm:rw_lower_bound} in the next section.

Informally, the $r$-round pointer chasing problem has parameters $r$ and $m$ and there are two players, which could be Alice and Bob or nodes $s$ and $t$, who receive functions $f_A:[m]\rightarrow[m]$ and $f_B:[m]\rightarrow[m]$, respectively. The goal is to compute a function starting from $1$ and alternatively applying $f_A$ and $f_B$ for $r$ times each, i.e., compute $f_B(\ldots f_A(f_B(f_A)))$ where $f_A$ and $f_B$ appear $r$ times each.
To be precise, let $\cF_m$ be the set of functions $f:[m]\rightarrow [m]$. For any $i\geq 0$ define $g^{i}: \cF_m\times\cF_m\rightarrow [m]$ inductively as
\[g^{0}(f_A, f_B)=1  ~~~\mbox{and}~~~\]
\[g^{i}(f_A, f_B)\begin{cases}
f_A(g^{i-1}(f_A, f_B)) & \mbox{if $i>0$ and $i$ is odd,}\\
f_B(g^{i-1}(f_A, f_B)) & \mbox{if $i>0$ and $i$ is even.}
\end{cases}\]
Also define function $\PC^{i, m}(f_A, f_B)=g^{2i}(f_A, f_B)$. The goal of the $r$-round pointer chasing problem is to compute $\PC^{r, m}(f_A, f_B)$.


Observe that if Alice and Bob can communicate for $r$ rounds then they can compute $\PC^{r, m}$ naively by exchanging $O(r\log m)$ bits. Interestingly, Nisan and Wigderson~\cite{NisanW93} show that if Alice and Bob are allowed only $r-1$ rounds then they essentially cannot do anything better than having Alice sent everything she knows to Bob.\footnote{In fact this holds even when Alice and Bob are allowed $r$ rounds but Alice cannot send a message in the first round.}


%

\begin{theorem}\cite{NisanW93}\label{thm:pointer_chasing}
$R^{(r-1)-cc-pub}_{1/3}(\PC^{r, m})=\Omega(m/r^{2}-r\log m)$.
\end{theorem}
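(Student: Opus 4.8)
\medskip
\noindent\textbf{Proof proposal.}
The plan is to prove the distributional version of the bound under a hard product distribution and then invoke Yao's minimax principle; the core is a round-elimination argument in the spirit of Nisan and Wigderson. Let $\mu$ be the distribution in which $f_A,f_B$ are drawn independently and uniformly from $\cF_m$. By Yao's principle it suffices to show that every \emph{deterministic} protocol with $r-1$ rounds (i.e.\ $2(r-1)$ messages, Alice speaking first) that computes $\PC^{r,m}$ with error at most $1/3$ under $\mu$ exchanges $\Omega(m/r^2-r\log m)$ bits. The structural fact I would exploit is a ``freshness invariant'': conditioned on the event $E$ (of probability $\ge 1-O(r^2/m)$, hence harmless in the regime where the bound is non-vacuous) that the chase $p_0=1,\ p_1=f_A(p_0),\ p_2=f_B(p_1),\dots,p_{2r}$ visits $2r+1$ distinct vertices, the value used at each hop is a fresh, uniform, independent draw: revealing the coordinates of $f_A,f_B$ on demand along the path, the point $p_{2k}$ (the next place where $f_A$ is queried) depends on $f_A$ only through its values at $p_0,p_2,\dots,p_{2k-2}$, so $f_A(p_{2k})$ is uniform and independent of the history, and symmetrically for $f_B$.

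The central lemma I would establish is a round-elimination step: if $\PC^{k,m}$ admits a deterministic protocol under $\mu$ with $t$ rounds, error $\le\delta$, and first message of length $\le a$, then $\PC^{k-1,m}$ admits a deterministic protocol under $\mu$ with $t-1$ rounds (now the other player speaks first), error $\le\delta+O(\sqrt{a/m}+r^2/m)$, and no more total communication. The mechanism: the first message $M$ is a function of $f_A$ only; fixing $M$ restricts $f_A$ to a set $S$ with $|S|\ge 2^{-|M|}m^m$, and computing $\PC^{k,m}(f_A,f_B)$ when the value $f_A(1)$ is provided ``for free'' is exactly a shorter, roles-exchanged pointer chase started one hop in. Using public randomness the players relabel coordinates so that the hop being peeled corresponds to a uniformly random coordinate of $f_A$, and the only loss is in pretending its value is exactly uniform conditioned on $f_A\in S$. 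That loss is bounded by the estimate: if $X$ is uniform on $\cF_m=[m]^{[m]}$ and $M=M(X)$ has at most $a$ bits, then by subadditivity $\sum_{i\in[m]}\ent(X_i\mid M)\ge\ent(X\mid M)\ge m\log m-a$, hence $\e_{i\sim[m]}\big(\log m-\ent(X_i\mid M)\big)\le a/m$, and Pinsker together with Jensen give $\e_{i\sim[m]}\,\mathrm{TV}\big(X_i\mid M,\ U_{[m]}\big)=O(\sqrt{a/m})$.

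Iterating the lemma $r-1$ times peels $\PC^{r,m}\to\PC^{r-1,m}\to\cdots\to\PC^{1,m}$, shedding one round per step. Writing $C$ for the original communication, there is a dichotomy: either some message slot of the original protocol has length $\ge m/(cr^2)$ for a suitable absolute constant $c$ --- so $C\ge m/(cr^2)$ and we are done --- or every slot is shorter, in which case (also absorbing the $O(r^3/m)$ coming from the no-collision conditioning, negligible once $m\gg r^3$) the accumulated error after $r-1$ peels is at most $1/3+(r-1)\cdot O\big(1/(\sqrt{c}\,r)\big)<1/2$ once $c$ is large enough. But the endpoint is a $0$-round protocol for $\PC^{1,m}(f_A,f_B)=f_B(f_A(1))$: with no communication Bob must output a guess for $f_B(f_A(1))$, which under $\mu$ is uniform on $[m]$ and about which $f_B$ alone reveals essentially nothing, so any such protocol errs with probability $1-o(1)>1/2$ --- a contradiction. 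Hence $C=\Omega(m/r^2)$, and since the bound is vacuous once $m/r^2\lesssim r\log m$ this is exactly $R^{(r-1)-cc-pub}_{1/3}(\PC^{r,m})=\Omega(m/r^2-r\log m)$. The footnote's variant (allowing $r$ rounds but a vacuous first Alice-message) is precisely this base case before any peeling, where the speaker for the ``useful'' first round is the wrong party.

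I expect the main obstacle to be the bookkeeping inside the round-elimination step, not the entropy estimate. One must (i) track which player speaks first at each stage and keep the index arithmetic of the pointer-chasing depth consistent through the roles-exchange, while maintaining --- under conditioning on all previously fixed messages and on the no-collision event --- that the peeled hop still looks like an information-free uniform coordinate; this is where the public-coin relabelling/embedding has to be set up so that correctness is still measured against the correct product distribution. And (ii) one must make the ``long message or bounded accumulated error'' dichotomy airtight, ensuring the message whose length is bounded at each stage is exactly the one the reduction fixes, so that $r$ rounds of $O(\sqrt{a/m})$ error turn into the claimed $\Omega(m/r^2)$ bound. Passing from deterministic distributional complexity back to randomized public-coin complexity is immediate by Yao.
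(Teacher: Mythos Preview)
The paper does not prove this statement; it is quoted from Nisan and Wigderson~\cite{NisanW93} and invoked as a black box in the derivation of Lemma~\ref{lem:PC_dist_lowerbound}. There is therefore no in-paper argument to compare against.

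Your sketch is a faithful outline of the original Nisan--Wigderson round-elimination proof: Yao's principle to pass to deterministic distributional complexity under the uniform product measure, the entropy/Pinsker bound showing that conditioning on an $a$-bit message perturbs a random coordinate by $O(\sqrt{a/m})$ in total variation, the one-message peel that swaps roles and drops the chase depth by one at that statistical cost, and the long-message/bounded-accumulated-error dichotomy. Two minor corrections worth flagging. First, at the terminal $0$-round step Bob knows all of $f_B$, so $f_B(f_A(1))$ is not uniform from his point of view; the right endgame is that $f_A(1)$ is (close to) uniform and independent of $f_B$, so Bob's optimal output is a mode of $f_B$, which for random $f_B$ succeeds with probability $o(1)$. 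Second, be careful to align your message/round counting with the paper's convention (one ``round'' is an Alice message followed by a Bob message), since this governs how many peels you perform before reaching the base case. Neither point is a gap in the strategy.
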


\paragraph{The pointer chasing problem on $\graph$.} We now consider the pointer chasing problem on network $\graph$ where $s$ and $t$ receive $f_A$ and $f_B$ respectively. The following lemma follows from Theorem~\ref{thm:cc_to_distributed} and \ref{thm:pointer_chasing}.
%
%
\begin{lemma}\label{lem:PC_dist_lowerbound}
For any $\kappa$, $\Gamma$, $\Lambda\geq 2$, $m\geq \kappa^2\Lambda^{4\kappa}\log n$, $16\Lambda^{\kappa-1}\geq r>8\Lambda^{\kappa-1}$, $R^{\graph, s, t}_{1/3}(\PC^{r, m})=\Omega(\kappa\Lambda^{\kappa})$.
\end{lemma}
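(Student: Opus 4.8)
\textbf{Proof plan for Lemma~\ref{lem:PC_dist_lowerbound}.}
The plan is to derive the distributed lower bound by contradiction from the two-party bounded-round lower bound for pointer chasing (Theorem~\ref{thm:pointer_chasing}), using the simulation result Theorem~\ref{thm:cc_to_distributed} as the bridge. Suppose, for contradiction, that there is an $\epsilon$-error distributed algorithm $\cA$ (with $\epsilon=1/3$) computing $\PC^{r,m}$ on $\graph$ in time $T_\cA = o(\kappa\Lambda^\kappa)$, i.e.\ $T_\cA \le \kappa\Lambda^\kappa$ so the hypothesis of Theorem~\ref{thm:cc_to_distributed} is met. Then Theorem~\ref{thm:cc_to_distributed} produces a direct communication protocol for $\PC^{r,m}$ using at most $(2\kappa\log n)T_\cA$ bits and at most $\frac{8T_\cA}{\kappa\Lambda}$ rounds, with the same error $1/3$. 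The goal is to show that these parameters contradict Theorem~\ref{thm:pointer_chasing}: the round count must fall below $r$, and the bit count must be asymptotically smaller than the communication lower bound $\Omega(m/r^2 - r\log m)$.

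First I would bound the number of rounds. By assumption $T_\cA = o(\kappa\Lambda^\kappa)$; more carefully, to get a clean contradiction I want $\frac{8T_\cA}{\kappa\Lambda} \le r-1$, i.e.\ $T_\cA \le \frac{(r-1)\kappa\Lambda}{8}$. Since we are assuming $T_\cA$ is small (the target lower bound is $\Omega(\kappa\Lambda^\kappa)$ and $r > 8\Lambda^{\kappa-1}$ means $\frac{(r-1)\kappa\Lambda}{8} = \Omega(\kappa\Lambda^\kappa)$), for $T_\cA$ below a suitable constant times $\kappa\Lambda^\kappa$ the simulation uses fewer than $r$ rounds — in fact $\le r-1$ rounds. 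So the protocol is an $(r-1)$-round protocol for $\PC^{r,m}$. Next I would bound the bits: the protocol uses at most $(2\kappa\log n)T_\cA = O(\kappa\Lambda^\kappa \cdot \kappa\log n) = O(\kappa^2\Lambda^\kappa\log n)$ bits. On the other hand, Theorem~\ref{thm:pointer_chasing} says any $(r-1)$-round $1/3$-error protocol needs $\Omega(m/r^2 - r\log m)$ bits. Using $r = \Theta(\Lambda^{\kappa-1})$ and $m \ge \kappa^2\Lambda^{4\kappa}\log n$, we get $m/r^2 = \Omega(\kappa^2\Lambda^{4\kappa}\log n / \Lambda^{2\kappa-2}) = \Omega(\kappa^2\Lambda^{2\kappa+2}\log n)$, while $r\log m = O(\Lambda^{\kappa-1}\log(\kappa^2\Lambda^{4\kappa}\log n)) = O(\Lambda^{\kappa-1}\kappa\log(\Lambda n))$, which is dominated by $m/r^2$. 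Hence the communication lower bound is $\Omega(\kappa^2\Lambda^{2\kappa+2}\log n)$, which for $\Lambda\ge 2$, $\kappa\ge 1$ strictly exceeds $O(\kappa^2\Lambda^\kappa\log n)$ — a contradiction. This forces $T_\cA = \Omega(\kappa\Lambda^\kappa)$, which is the claim.

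The place that needs the most care — and which I expect to be the main technical obstacle in writing this cleanly — is making the quantifiers and the constants line up: Theorem~\ref{thm:cc_to_distributed} requires $r_{\cA} := T_\cA \le \kappa\Lambda^\kappa$ before it even applies, so the argument is really "either $T_\cA > c\kappa\Lambda^\kappa$ for the relevant constant $c$, and we are done, or $T_\cA \le c\kappa\Lambda^\kappa$, in which case we run the simulation and derive a contradiction." I need to choose the constant $c$ small enough that (i) the simulation hypothesis $T_\cA \le \kappa\Lambda^\kappa$ holds, (ii) the round bound $\frac{8T_\cA}{\kappa\Lambda} \le r-1$ holds — here is where $r > 8\Lambda^{\kappa-1}$ is used, since it gives $r-1 \ge 8\Lambda^{\kappa-1} > \frac{8 c \kappa\Lambda^\kappa}{\kappa\Lambda}$ when $c < 1$, modulo the floor issues in $\frac{8T_\cA}{\kappa\Lambda}$ versus $\frac{8R_\epsilon^{\graph,s,t}(f)}{\kappa\Lambda}$, and (iii) the bit bound $(2\kappa\log n)c\kappa\Lambda^\kappa$ is below the Nisan–Wigderson threshold, which as computed above has slack of order $\Lambda^{\kappa+2}$, so this is comfortable for any constant $c$. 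I would also double-check that the upper constraint $16\Lambda^{\kappa-1}\ge r$ is what keeps $r\log m$ from blowing up relative to $m/r^2$ and that the hypothesis $m \ge \kappa^2\Lambda^{4\kappa}\log n$ is exactly calibrated so that $m/r^2$ swamps both $r\log m$ and the bit count $O(\kappa^2\Lambda^\kappa\log n)$. No deep new idea is needed beyond Theorems~\ref{thm:cc_to_distributed} and~\ref{thm:pointer_chasing}; the work is purely in the bookkeeping of these inequalities.
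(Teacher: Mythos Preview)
Your proposal is correct and follows essentially the same route as the paper: combine Theorem~\ref{thm:cc_to_distributed} with Theorem~\ref{thm:pointer_chasing}, using the hypothesis $r>8\Lambda^{\kappa-1}$ to force the simulated protocol into at most $r-1$ rounds and the hypothesis $m\ge\kappa^2\Lambda^{4\kappa}\log n$ to make the Nisan--Wigderson bound dominate the bit count. The paper packages this as a direct chain of inequalities (case $T_\cA>\kappa\Lambda^\kappa$ is trivial; otherwise the simulation plus Theorem~\ref{thm:pointer_chasing} gives $T_\cA=\Omega(\kappa\Lambda^\kappa)$) rather than an explicit contradiction, but the content is identical and your bookkeeping is sound.
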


\begin{proof}
%
Let $r=R_{1/3}^{\graph, s, t}(\PC^{r, m})$. If $r> \kappa\Lambda^\kappa$ then we are done so we assume that $r \leq \kappa\Lambda^\kappa$. Thus,
\begin{align}
r&\geq \frac{R_{1/3}^{\frac{8R_{1/3}^{\graph, s, t}(\PC^{r, m})}{\kappa\Lambda}-cc-pub}(\PC^{r, m})}{(2\kappa\log n)} \label{eq:a}\\
&\geq \frac{R_{1/3}^{\frac{8\kappa\Lambda^\kappa}{\kappa\Lambda}-cc-pub}(\PC^{r, m})}{(2\kappa\log n)}\label{eq:b} \\
&= \Omega(\frac{(m(8\Lambda^{\kappa-1})^{-2}-8\Lambda^{\kappa-1}\log m)}{(\kappa\log n)}) \label{eq:c} \\
&= \Omega(\kappa\Lambda^{\kappa}) \label{eq:d}
\end{align}
where Eq.~\eqref{eq:a} is by Theorem~\ref{thm:cc_to_distributed} and the fact that $r\leq \kappa\Lambda^\kappa$, Eq.~\eqref{eq:b} uses the fact that the communication does not increase when we allow more rounds and $R_{1/3}^{\graph, s, t}(\PC^{r, m})\leq \kappa\Lambda^\kappa$, Eq.~\eqref{eq:c} follows from Theorem~\ref{thm:pointer_chasing} with the fact that $16\Lambda^{\kappa-1}\geq r>8\Lambda^{\kappa-1}$ and Eq.~\eqref{eq:d} is because $m\geq \kappa^2\Lambda^{4\kappa}\log n$.
%
%
%
\end{proof}

\section{Proof of the main theorem}\label{sec:main_theorem}
In this section, we prove Theorem~\ref{thm:rw_lower_bound}. An $\Omega(D)$ lower bound has already been shown (and is fairly straightforward) in \cite{DasSarmaNP09}; so we focus on showing the $\Omega(\sqrt{\ell D})$ lower bound. Moreover, we will prove the theorem only for the version where destination outputs source. This is because we can convert algorithms for the other two version to solve this version by adding $O(D)$ rounds. To see this, observe that once the source outputs the ID of the destination, we can take additional $O(D)$ rounds to send the ID of the source to the destination. Similarly, if nodes know their positions, the node with position $\ell$ can output the source's ID by taking additional $O(D)$ rounds to request for the source's ID. Theorem~\ref{thm:rw_lower_bound}, for the case where destination outputs source, follows from the following lemma.

\begin{lemma}
For any real $\kappa\geq 1$ and integers $\Lambda\geq 2$, and $\Gamma\geq 32\kappa^2\Lambda^{6\kappa-1}\log n$, there exists a family of networks $\cH$ such that any network $H\in \cH$ has $\Theta(\kappa\Gamma\Lambda^\kappa)$ nodes and diameter $D=\Theta(\kappa\Lambda)$, and
%
any algorithm for computing the destination of a random walk of length $\ell=\Theta(\Lambda^{2\kappa-1})$ requires $\Omega(\sqrt{\ell D})$ time on some network $H\in \cH$.
\end{lemma}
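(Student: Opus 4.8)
The plan is to reduce from the distributed pointer chasing lower bound of Lemma~\ref{lem:PC_dist_lowerbound}. Given the network $\graph$ with the parameters $\kappa$, $\Lambda$, and $\Gamma$ as in the hypothesis, I will build a family $\cH$ of networks with the same underlying graph as $\graph$ (hence the same $\Theta(\kappa\Gamma\Lambda^\kappa)$ nodes and diameter $\Theta(\kappa\Lambda)$), but with the edge multiplicities of the non-highway edges chosen as a function of a pointer-chasing instance $(f_A, f_B)$. The key design goal, exactly as in \cite{DasSarmaNPT-PODC10}, is that a random walk of length $\ell=\Theta(\Lambda^{2\kappa-1})$ starting at $s$ follows, with high probability, a unique ``intended'' path that threads through the paths $\cP^1,\dots,\cP^\Gamma$ in an order dictated by $\PC^{r,m}(f_A,f_B)$ (with $m=\Theta(\Gamma)$ so that a pointer names a path index, and $r=\Theta(\Lambda^{\kappa-1})$ so that the number of alternations times the per-path length $\Theta(\kappa\Lambda^\kappa)$ is $\Theta(\ell)$). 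Concretely, at each ``crossing'' the walk currently sits near $s$ or $t$; the capacities out of that junction are skewed so that with overwhelming probability the token enters the path indexed by the next pointer value, traverses it to the other side, and the identity of that path is the next input to the alternating composition. Thus learning the final destination (which path's endpoint the walk ends at) is equivalent to computing $\PC^{r,m}(f_A,f_B)$.

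The steps, in order: (1) fix $m$ and $r$ in terms of $\Gamma,\kappa,\Lambda$ so that $16\Lambda^{\kappa-1}\ge r>8\Lambda^{\kappa-1}$ and $m\ge \kappa^2\Lambda^{4\kappa}\log n$ — the lower bound on $\Gamma$ in the statement, $\Gamma\geq 32\kappa^2\Lambda^{6\kappa-1}\log n$, is exactly what is needed to make these two constraints simultaneously satisfiable while keeping $\ell=\Theta(\Lambda^{2\kappa-1})$; (2) describe the capacity assignment $H=H(f_A,f_B)$: give the ``correct-next-path'' edges out of $s$ (resp.\ $t$) a large multiplicity and all competing edges multiplicity $1$, and likewise steer the walk along each path rather than into the highways or back out prematurely, so that the per-step probability of deviating is $O(1/\poly)$; (3) a coupling / union-bound argument: over the $\Theta(\ell)$ steps the walk makes, the probability of ever deviating from the intended trajectory is $o(1)$, so the destination node encodes $\PC^{r,m}(f_A,f_B)$ with probability $\ge 1-o(1)$; (4) conclude that any $\epsilon$-error (say $\epsilon=1/10$) distributed algorithm computing the destination of the $\Theta(\ell)$-walk on $H$, followed by a trivial $O(D)$ post-processing to read off the path index at $t$, yields a $1/3$-error distributed algorithm for $\PC^{r,m}$ on $\graph$ with $s,t$ holding $f_A,f_B$; (5) apply Lemma~\ref{lem:PC_dist_lowerbound} to get that this takes $\Omega(\kappa\Lambda^\kappa)$ rounds, and finally translate the parameters: $\kappa\Lambda^\kappa = \Theta(\sqrt{(\Lambda^{2\kappa-1})(\kappa\Lambda)}) = \Theta(\sqrt{\ell D})$, which is the desired bound.

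The main obstacle I expect is step (2)–(3): carefully engineering the capacities so that (a) the walk is genuinely \emph{forced} along a path whose index is the pointer value — in particular the walk must not be able to ``cheat'' by using the highway structure of $\graph$ to short-circuit the intended route, and must not wander back toward $s$ once it has committed to a path — and (b) the total failure probability stays $o(1)$ despite $\Theta(\ell)$ opportunities to deviate, which forces the ``wrong'' capacities to be only polynomially small, hence forces $n$ (equivalently $\Gamma$) to be polynomially large relative to $\ell$ — this is the origin of the constraint $\ell\le (n/(D^3\log n))^{1/4}$ in Theorem~\ref{thm:rw_lower_bound}. A secondary subtlety is ensuring that the reduction is faithful for \emph{randomized} walk-computation algorithms: since $H$ depends on the (unknown to the algorithm) input $(f_A,f_B)$ only through edge multiplicities and the graph topology is fixed, the simulation in Section~\ref{sec:communication_complexity} still applies, and the error probabilities compose additively, so choosing the walk-algorithm error constant small enough absorbs the $o(1)$ deviation probability and lands inside the $1/3$-error regime of Theorem~\ref{thm:pointer_chasing}. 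I would also need to double-check the edge-count bookkeeping: the multiplicities must be chosen so that $H$ still has $\Theta(\kappa\Gamma\Lambda^\kappa)$ \emph{nodes} (node count is unaffected by multiplicities) and so that the $O(\log n)$-bandwidth (\CONGEST) model is the relevant one, i.e.\ that the large multiplicities used for steering are consistent with the at-most-$B$-bits-per-edge accounting used when invoking Theorem~\ref{thm:cc_to_distributed}.
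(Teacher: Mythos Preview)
Your proposal is correct and follows essentially the same approach as the paper's proof: encode $(f_A,f_B)$ into the non-highway edge multiplicities of $\graph$ so that a walk of length $\ell=2rL-1=\Theta(\Lambda^{2\kappa-1})$ (with $r=16\Lambda^{\kappa-1}$ and $L=\Theta(\kappa\Lambda^\kappa)$ the path length) is forced with probability $\geq 2/3$ along an alternating $S$-path/$T$-path trajectory whose endpoint reveals $\PC^{r,m}$, and then invoke Lemma~\ref{lem:PC_dist_lowerbound} to get $\Omega(\kappa\Lambda^\kappa)=\Omega(\sqrt{\ell D})$. The only places the paper is more concrete than your sketch are (i) the steering in step~(2) is realized by giving the $x$-th edge along the intended trajectory multiplicity $(6\Gamma\ell)^x$, so that ``forward'' beats ``backward plus the at most $\Gamma$ stray edges'' by a factor $\geq 3\ell$ at every node, and (ii) one needs $2rm\leq\Gamma$ distinct paths, so $m=\Theta(\Gamma/r)$ rather than $m=\Theta(\Gamma)$ --- the paper takes $m=\kappa^2\Lambda^{5\kappa}\log n$, which together with $r=16\Lambda^{\kappa-1}$ is exactly what the hypothesis $\Gamma\geq 32\kappa^2\Lambda^{6\kappa-1}\log n$ accommodates.
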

\begin{proof}
We show how to compute $\PC^{r, m}$ on $G=\graph$ by reducing the problem to the problem of sampling a random walk destination in some network $H_{f_A, f_B}$, obtained by restrict the number of copies of some edges in $G$, depending on input functions $f_A$ and $f_B$. We let $\cH$ be the family of network $H_{f_A, f_B}$ over all input functions. Note that for any input functions, an algorithm on $H_{f_A, f_B}$ can be run on $G$ with the same running time since every edge in $G$ has more capacity than its counterpart in $H_{f_A, f_B}$.
%

Let $r=16\Lambda^{\kappa-1}$ and $m=\kappa^2\Lambda^{5\kappa}\log n$. Note that $2rm\leq \Gamma$.
For any $i\leq r$ and $j\leq m$, let 
\[S^{i, j}=\cP^{2(i-1)m+j}~~~\mbox{and}~~~T^{i, j}=\cP^{2(i-1)m+m+j}\,.\]
That is, $S^{1, 1}=\cP^1$, $\ldots$, $S^{1, m}=\cP^m$, $T^{1, 1}=\cP^{m+1}$, $\ldots$, $T^{1, m}=\cP^{2m}$, $S^{2, 1}=\cP^{2m+1}$, $\ldots$, $T^{r, m}=\cP^{2rm}$.
%
%
Let $L$ be the number of nodes in each path. Note that $L=\Theta(\kappa\Lambda^\kappa)$ by Lemma~\ref{lem:graphsize}. Denote the nodes in $S^{i, j}$ from {\em left to right} by $s^{i, j}_1, \ldots, s^{i, j}_L$. (Thus, $s^{i, j}_1=v^{2(i-1)m+j}_{-\infty}$ and $s^{i,j}_L=v^{2(i-1)m+j}_{\infty}$.) Also denote the nodes in $T^{i, j}$ from {\em right to left} by $t^{i, j}_1, \ldots, t^{i, j}_L$. (Thus, $t^{i, j}_1=v^{2(i-1)m+m+j}_{\infty}$ and $t^{i,j}_L=v^{2(i-1)m+m+j}_{-\infty}$.) Note that for any $i$ and $j$, $s^{i, j}_1$ and $t^{i, j}_L$ are adjacent to $s$ while  $s^{i, j}_L$ and $t^{i, j}_1$ are adjacent to $t$.

Now we construct $H_{f_A, f_B}$. For simplicity, we fix input functions $f_A$ and $f_B$ and denote $H_{f_A, f_B}$ simply by $H$. To get $H$ we let every edge in $G$ have one copy (thus with capacity $O(\log n)$), except the following edges.
For any $i\leq r$, $j\leq m$, and $x<L$, we have $(6\Gamma\ell)^{2(i-1)L+x}$ copies of edges between nodes $s^{i,j}_x$ and $s^{i,j}_{x+1}$  and $(6\Gamma\ell)^{2(i-1)L+L+x}$ copies of edges between nodes $t^{i,j}_x$ and $t^{i,j}_{x+1}$.
Note that these numbers of copies of edges are always the same, regardless of the input $f_A$ and $f_B$.

Additionally, we have the following numbers of edges which depend on the input functions. First, $s$ specifies the following number of edges between its neighbors. For any $i\leq r$, $j\leq m$, we have $(6\Gamma\ell)^{2(i-1)L+L}$ copies of edges between nodes $t^{i,j}_L$ and $s^{i, f_A(j)}_{1}$. These numbers of edges can be specified in one round since both $s^{i,j}_1$ and $t^{i, f_A(j)}_{L}$ are adjacent to $s$. Similarly, we have $(6\Gamma\ell)^{2(i-1)L+2L}$ copies of edges between nodes $t^{i,j}_1$ and $s^{i+1, f_B(j)}_{L}$ which can be done in one round since both nodes are adjacent to $t$. This completes the description of $H$.

Now we use any random walk algorithm to compute the destination of a walk of length $\ell=2rL-1=\Theta(\Lambda^{2\kappa-1})$ on $H$ by starting a random walk at $s^{1, f(A)}_1$. If the random walk destination is $t^{r, j}_L$ for some $j$, then node $t$ outputs the number $j$; otherwise, node $t$ outputs an arbitrary number.

Now observe the following claim. 

\begin{claim}\label{claim:highprob}
Node $t$ outputs $\PC^{r, m}(f_A, f_B)$ with probability at least $2/3$.
\end{claim}
\begin{proof}
Let $P^*$ be the path consisting of nodes $s^{1, f_A(1)}_1$, $\ldots$, $s^{1, f_A(1)}_L$, $t^{1, f_B(f_A(1))}_1$, $\ldots$, $t^{1, f_B(f_A(1))}_L$, $s^{1, f_A(f_B(f_A(1)))}_1$, $\ldots$, $s^{i, g^{2i-1}(f_A, f_B)}_L$, $t^{i, g^{2i}(f_A, f_B)}_1$, $\ldots$, $t^{r, g^{2r}(f_A, f_B)}_L$. We claim that the random walk will follow path $P^*$ with probability at least $2/3$. The node of distance $(2rL-1)$ from $s^{1, f_A(1)}_1$ in this path is $t^{r, g^{2r}(f_A, f_B)}_L=t^{r, \PC^{r, m}(1)}_L$ and thus the algorithm described above will output $\PC^{r, m}(1)$ with probability at least $2/3$.

To prove the above claim, consider any node $u$ in path $P^*$. Let $u'$ and $u''$ be the node before and after $u$ in $P^*$, respectively. Let $m'$ and $m''$ be the number of multiedges $(u, u')$ and $(u, u'')$, respectively. Observe that $m''\geq 6\Gamma \ell m'$. Moreover, observe that there are at most $\Gamma$ edges between $u$ and other nodes. Thus, if a random walk is at $u$, it will continue to $u''$ with probability at least $1-\frac{1}{3\ell}$. By union bound, the probability that a random walk will follow $P^*$ is at least $1-\frac{1}{3}$, as claimed.
\end{proof}

Thus, if there is any random walk algorithm with running time $O(T)$ on all networks in $\cH$ then we can use such algorithm to solve $\PC^{r, m}$ (with error probability $1/3$) in time $O(T)$. Using the lower bound of computing solving $\PC^{r, m}$ in Lemma~\ref{lem:PC_dist_lowerbound}, the random walk computation also has a lower bound of $\Omega(\kappa\Lambda^\kappa)=\Omega(\sqrt{\ell D})$ as claimed.
\end{proof}

To prove Theorem~\ref{thm:rw_lower_bound} with the given parameters $n$, $D$ and $\ell$, we simply set $\Lambda$ and $\kappa$ so that $\kappa\Lambda=D$ and $\Lambda^{2\kappa-1}=\Theta(\ell)$. This choice of $\Lambda$ and $\kappa$ exists since $\ell\geq D$. Setting $\Gamma$ large enough so that $\Gamma\geq 32\kappa^2\Lambda^{6\kappa-1}\log n$ while $\Gamma=\Theta(n)$. (This choice of $\Gamma$ exists since $\ell\leq (n/(D^3\log n))^{1/4}$.) By applying the above lemma, Theorem~\ref{thm:rw_lower_bound} follows. \note{In this setting $n=\kappa\Gamma\Lambda^\kappa=\kappa(32\kappa^2\Lambda^{6\kappa-1}\log n)\Lambda^{\kappa}\leq 32D^3\ell^4\log n$} \danupon{Is it necessary to explain this with more detail?} 
\section{Conclusion}
In this paper we prove a tight unconditional lower bound on the time complexity of distributed random walk computation, implying that the algorithm in \cite{DasSarmaNPT-PODC10} is time optimal. To the best of our knowledge, this is the first lower bound that the diameter plays a role of multiplicative factor. Our proof technique comes from strengthening the connection between communication complexity and distributed algorithm lower bounds initially studied in \cite{DasSarmaHKKNPPW10} by associating {\em rounds} in communication complexity to the distributed algorithm running time, with network diameter as a trade-off factor.

There are many open problems left for random walk computation. One interesting open problem is showing a lower bound of performing a long walk. We conjecture that the same lower bound of $\tilde\Omega(\sqrt{\ell D})$ holds for any $\ell=O(n)$. However, it is not clear whether this will hold for longer walks. For example, one can generate a {\em random spanning tree} by computing a walk of length equals the cover time (using the version where every node knows their positions) which is $O(mD)$ where $m$ is the number of edges in the network (see detail in \cite{DasSarmaNPT-PODC10}). It is interesting to see if performing such a walk can be done faster.  Additionally, the upper and lower bounds of the problem of generating a random spanning tree itself is very interesting since its current upper bound of $\tilde O(\sqrt{m}D)$~\cite{DasSarmaNPT-PODC10} simply follows as an application of random walk computation~\cite{DasSarmaNPT-PODC10} while no lower bound is known.
Another interesting open problem prove the lower bound of $\tilde \Omega(\sqrt{K\ell D})$ for some value of $\ell$ for the problem of performing {\em $K$ walks} of length $\ell$.

In light of the success in proving distributed algorithm lower bounds from communication complexity in this and the previous work~\cite{DasSarmaHKKNPPW10}, it is also interesting to explore further applications of this technique. One interesting approach is to show a connection between distributed algorithm lower bounds and other models of communication complexity, such as multiparty and asymmetric communication complexity (see, e.g., \cite{KNbook}).
One particular interesting research topic is applying this technique to distance-related problems such as shortest $s$-$t$ path, single-source distance computation, and all-pairs shortest path. The lower bound of $\Omega(\sqrt{n})$ are shown in \cite{DasSarmaHKKNPPW10} for these types of problems. It is interesting to see if there is an $O(\sqrt{n})$-time algorithm for these problems (or any sub-linear time algorithm) or a time lower bound of $\omega(\sqrt{n})$ exists. 
The special cases of these problems on complete graphs (as noted in \cite{Elkin-sigact04}) are particularly interesting. 
Besides these problems, there are still some gaps between upper and lower bounds of problems considered in \cite{DasSarmaHKKNPPW10} such as the minimum cut and generalized Steiner forest.

\bibliographystyle{plain}
\bibliography{randomwalk-lowerbound}


\end{document}